\newtheorem{theorem}{Theorem}[section]
\newtheorem{lemma}[theorem]{Lemma}
\theoremstyle{definition}
\theoremstyle{remark}
\newtheorem{remark}{Remark}
\theoremstyle{remark}
\theoremstyle{remark}
\newtheorem{challenge}{Challenge}
\newcommand{\red}[1] {\textcolor{red}{#1}}
\newcommand{\MA}[1] {\textcolor{green}{MA: #1}}
\newcommand{\blue}[1] {\textcolor{blue}{#1}}
\newcommand{\ie}{\textit{i}.\textit{e}.}
\newcommand{\eg}{\text{e}.\textit{g}.}
\newcommand{\yes}{\checkmark}
\newcommand{\no}{$\times$}
\begin{document}
 \bstctlcite{IEEEexample:BSTcontrol}
\title{Optimal Service Placement, Request Routing and CPU Sizing in Cooperative Mobile Edge Computing Networks for Delay-Sensitive Applications}

\author{\IEEEauthorblockN{
        Naeimeh Omidvar\IEEEauthorrefmark{1}, Mahdieh Ahmadi\IEEEauthorrefmark{2}, and Seyed Mohammad Hosseini\IEEEauthorrefmark{3},
    }\\
    \IEEEauthorblockA{\IEEEauthorrefmark{1} School of Computer Science, Institute for Research in Fundamental Sciences (IPM), Tehran, Iran.}\\
    \IEEEauthorblockA{\IEEEauthorrefmark{2} University of Waterloo, ON, Canada.} \\
    \IEEEauthorblockA{\IEEEauthorrefmark{3} Sharif University of Technology, Tehran, Iran. }
    }

\IEEEtitleabstractindextext{%
\begin{abstract}

In this paper, we study joint optimization of service placement, request routing, and CPU sizing in a cooperative mobile edge computing (MEC) system. The problem is considered from the perspective of the service provider (SP), which delivers heterogeneous MEC-enabled delay-sensitive services, and needs to pay for the used resources to the mobile network operators and the cloud provider, while earning revenue from the served requests. We formulate the problem of maximizing the SP's total profit subject to the computation, storage, and communication constraints of each edge node and end-to-end delay requirements of the services as a mixed-integer non-convex optimization problem, and prove it to be NP-hard. To tackle the challenges in solving the problem, we first introduce a design trade-off parameter for different delay requirements of each service, which maintains flexibility in prioritizing them, and transform the original  optimization problem by the new delay constraints. Then, by exploiting a hidden convexity, we reformulate these delay constraints 
into an equivalent form. Next, to handle the challenge of the complicating (integer) variables, using primal decomposition, we decompose the problem into an equivalent form of master and inner sub-problems over the mixed and real variables, respectively. We then employ a cutting-plane approach for building up adequate representations of the extremal value of the inner problem as a function of the complicating variables and the set of values of the complicating  variables for which the inner problem is feasible. Finally, we propose a solution strategy based on generalized Benders decomposition and prove its convergence to the optimal solution within a limited number of iterations. Extensive simulation results demonstrate that the proposed scheme significantly outperforms the existing mechanisms in terms of various metrics, including the SP's profit, cache hit ratio, running time, and end-to-end delay.  

\end{abstract}

\begin{IEEEkeywords}
Cooperative MEC, Service caching, Request routing, CPU sizing, Delay sensitive services, MINLP, 
 Optimality. 
\end{IEEEkeywords}}

\maketitle
\IEEEdisplaynontitleabstractindextext  
\IEEEpeerreviewmaketitle

\section{Introdunction}

\IEEEPARstart{T}{he}  escalating demand for computation-intensive and delay-sensitive applications in next-generation wireless communication networks poses significant challenges for service providers (SPs) 
in effectively meeting user demands within a centralized cloud infrastructure. 
Mobile Edge Computing (MEC) emerges as a solution which introduces a distributed cloud architecture, bringing computing capabilities closer to end-users, 
i.e., to the network edge \cite{mach2017mobile}. This proximity allows users to overcome the inherent latency limitations of prevalent centralized cloud systems, ensuring the fulfilment of the service  requirements. 
As such, MEC is identified as a prominent approach that 
will serve a crucial role in next-generation networks 
\cite{hu2015mobile}. 
However, the limited resources at edge nodes (especially compared to the centralized clouds), particularly in terms of computation and storage, create major challenges 
for SPs for effective service provision. 
Consequently, \textit{service placement} becomes crucial in MEC networks, determining which services should be cached at each Base Station (BS) to better  utilize  the limited resources at the edge nodes. 

To address storage and computation challenges, \textit{cooperative MEC} introduces collaborative resource sharing and task management among edge nodes, aiming to enhance efficiency, reliability, and responsiveness in MEC networks. 
However, realizing the performance enhancements and improved capabilities of cooperative MEC requires 
rigorous management of coordination and collaboration among multiple edge nodes. 
In particular, under the cooperative approach, a service request received at a BS may need to be 
delegated  to a neighbouring BS  which has cached such service and has also spare available resources. Therefore, the problem of \textit{request routing} 
is another important problem which further complicates cooperative MEC schemes. 
Furthermore, apart from intensive storage and computation requirements, many modern services also require uploading and routing data from the user to the serving BS, to be used as input for service execution at the BS, whose output must then be downloaded and routed 
for consumption by the user. Such required communication imposes another restrictive constraint, 
for  routing of the service requests. 

\begin{figure}[t]
  \centering
    \includegraphics[width=0.45\textwidth]{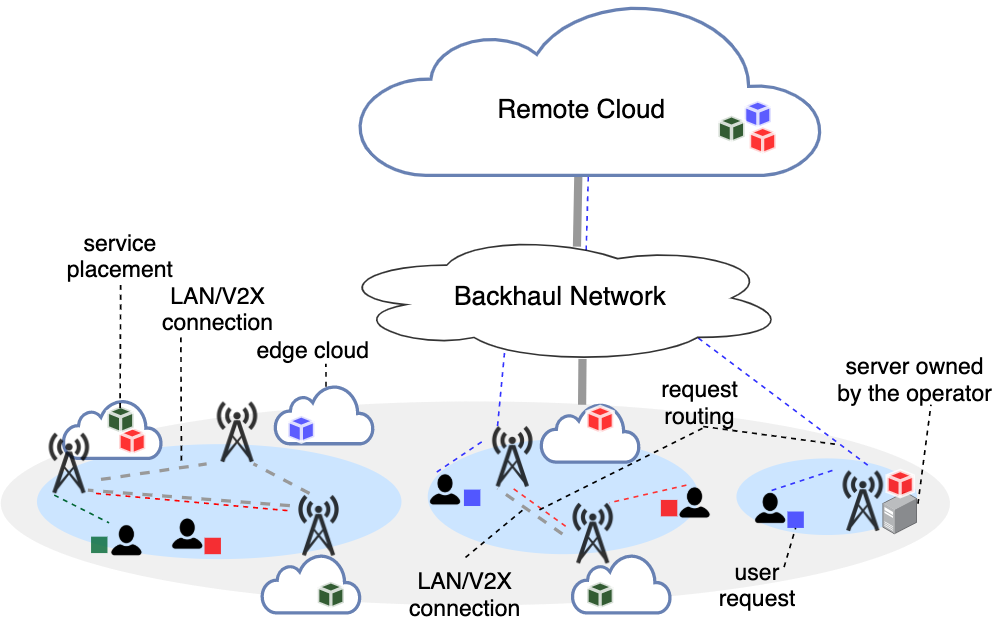} 
      \caption{An example MEC system leased and owned by the MNO with service placement and request routing.}
      \label{fig:network}
\end{figure}

Several existing studies focus on service placement and request routing in cooperative MEC, see e.g.,  \cite{somesula2023cooperative, xu2022near, poularakis2020service, xu2018joint}. These studies primarily address limitations related to edge nodes, including storage, computation, or communication constraints, as mentioned earlier. However, only a few also consider the diverse delay requirements of services \cite{nguyen2021two, wei2021joint, poularakis2020service, farhadi2021service, wen2020joint, somesula2023cooperative}, with most assuming that serving and network delays are constant irrespective of the workload. This assumption is not realistic, particularly for the increasingly emerging delay-sensitive services and applications. 

Furthermore, the existing solutions mainly focus on maximizing the performance of MEC in terms of different performance metrics in view of the network users or the cloud, 
such as maximizing the number of served requests (a.k.a., cache hits) \cite{farhadi2021service},  minimizing the average latency \cite{chen2021latency,somesula2023cooperative},  
or minimizing the cloud traffic \cite{poularakis2020service, nguyen2021two},  to mention a few. 
However, 
the economic 
perspective 
of the problem has been hitherto overlooked, which is indeed a critical aspect 
of cooperative MEC system designs, 
especially for  emerging and future networking applications. 
In particular, the SP, utilizing network resources to deliver MEC-enabled services, incurs costs for the used resources from MNOs and the cloud provider, and earns revenues from the served requests.  
Consequently, maximizing the SP's profit emerges as a pivotal 
factor in the efficiency of the MEC system design, an aspect that, 
has remained unexplored in the existing literature.  

Finally and most importantly, the existing works mainly present heuristic solutions, which does not have any theoretical performance guarantees and hence, may not perform well in practice. However, proposing optimal approaches for the problem of joint optimization of service placement and request routing is yet an important missing work, which, to the best of our knowledge, has not been touched.

In this work, we aim at designing a cooperative MEC system 
which maximizes the total profit of the SP 
by efficiently utilizing the limited 
multi-dimensional resources at the BSs 
while concurrently satisfying the heterogeneous services requirements. 
The key contributions of this work 
can be identified as follows.

\begin{enumerate}
\item We study joint optimization of service placement, CPU sizing, and request admission and scheduling in a cooperative MEC  system, from the perspective of the SP, which delivers heterogeneous MEC-enabled delay-sensitive services across these networks. The SP needs to pay for the used resources to the MNOs and the cloud provider, and earns revenue from the served requests. 
\item We formulate the problem of maximizing the SP’s total profit subject to the computation, storage, and communication constraints of each edge node and end-to-end delay requirements of the services. 
To address the needs of delay-sensitive services, 
we incorporate proper queuing models accounting for the randomness in request arrivals, as well as the service and congestion delays associated with each service.
This formulated problem is a challenging mixed-integer non-convex optimization problem, and we prove it to be NP-hard. 
\item To address the non-convexity challenge in solving the problem, we first introduce a design trade-off parameter for different delay requirements of each service, which maintains a degree of freedom for the priority of satisfying different service delay requirements. Then, by exploiting a hidden convexity, we reformulate the delay constraints into an equivalent convex form. 
\item To handle the challenge of the complicating integer variables, using primal decomposition, we decompose the problem into an equivalent form consisting of master and inner sub-problems, 
where the inner sub-problem contains the simple (i.e., the real) variables only, for fixed values of the complicating integer  variables.  We then employ a cutting-plane approach for building up adequate representations of the extremal value of the inner problem as a function of the 
integer variables and the set of values of the 
integer variables for which the inner problem is feasible. 
\item We propose an iterative 
solution strategy based on generalized Benders decomposition (GBD), and prove that the proposed algorithm converges to the optimal solution within a limited number of iterations. 
\item Finally, using extensive simulation results we demonstrate that the proposed scheme significantly improves the delay, cache hit ratio, load on cloud, and SP’s profit performance compared with various existing mechanisms. 

\end{enumerate}

The rest of the paper is organized as follows. Section~\ref{sec:rw} discusses  related works in the literature. 
After discussing related works in \ref{sec:rw}, we describe the considered system model in Section~\ref{sec:model}. Next in Section~\ref{sec:problem}, we formulate the problem of joint service placement and routing in a cooperative mobile edge computing (MEC) system, and discuss its time complexity. Section~\ref{sec:solution} presents the proposed solution strategy, followed by its convergence analysis in Section \ref{sec:convergence_analysis}. Extensive numerical results and discussions are presented in Section~\ref{sec:evaluation}. Finally, Section~\ref{sec:conclusion} concludes the paper.
\section{Related Work}\label{sec:rw}

{\scriptsize{
\begin{table*}[!h]
\captionsetup{justification=centering, labelsep=newline}
\begin{adjustwidth}{-.5in}{-.5in}
\caption{Comparison of Related Works}
\label{tab:solutions}
\centering
\begin{tabularx}{0.99\textwidth}{@{}l X c c c c c c c c c c c c c c c c c X@{}} \toprule
 \rotatebox{0}{Ref.} & \rotatebox{0}{Objective} & \rotatebox{90}{BSs Coverage Overlap} &\rotatebox{90}{Cooperative} & \rotatebox{90}{Request Admission} & \rotatebox{90}{Service Placement} & \rotatebox{90}{Request Scheduling} & \rotatebox{90}{CPU Sizing}  & \rotatebox{90}{Target Max. Delay$^1$} & \rotatebox{90}{Communication Delay} & \rotatebox{90}{Serving Delay} & \rotatebox{90}{Queuing Model} & \rotatebox{90}{Communication Capacity} & \rotatebox{90}{Node Storage Capacity} & \rotatebox{90}{Node Computation Capacity} & \rotatebox{90}{Migration Cost} & 
 \makecell{Solution Type,\\Method}
 \\
 \bottomrule
 \textbf{This work} 
 & Max SP's profit & \no & \yes & \yes & \yes & \yes & \yes & \yes & \yes & \yes & \yes & \yes  & \yes & \yes  & \no & 
 \makecell{Optimal, \\ GBD 
 }\\
 \bottomrule
 \cite{somesula2023cooperative} & Max time utility &\yes &\yes &\no &\yes &\yes &\no &\no & \no &\no & \no &\yes &\yes & \yes & \no &\makecell{Heuristic,\\Randomized Rounding \\ Circular Convex Set}\\ 
 \hline
 \cite{poularakis2020service} & Max num. of edge-served requests& \yes & \no & \no & \yes & \yes &\no  & \no & \no & \no & \no & \yes  & \yes & \yes &\yes& \makecell{Approximate, \\ Randomized rounding}\\
 \hline
  \cite{farhadi2021service} & Max edge served requests& \no & \yes & \no & \yes & \yes & \no & \no & \no & \no & \no & \yes & \yes & \yes &\yes& \makecell{Approximate,\\ Submodularity$^2$} \\
  \hline
 \cite{xu2018joint}& Min long-term average delay 
& \yes & \no & \no & \yes & \no$^{3a}$ & \no & \yes$^{3b}$ & \yes & \yes & \yes & \no & \yes & \yes &\no & \makecell{Heuristic, \\Lyapunov optimization \\ Gibbs sampling}    \\
\hline
 \cite{ma2020cooperative} & Min delay and cloud traffic & \no & \yes & \no & \yes & \yes & \no & \no & \yes & \yes & \yes & \yes & \yes & \yes &\no & \makecell{Heuristic, \\ Gibbs sampling and water filling} \\
 \hline
\cite{chen2021latency} & Min average latency& \no & \no & \no & \yes & \yes &\yes  & \no & \yes & \yes & \yes & \yes & \yes & \yes &\yes& \makecell{Approximate,\\Convex approximation}\\
\hline
\cite{nguyen2021two} & Min total cost of SP for one service&\no&\no&\no&\yes&\yes&\yes&\yes$^4$&\yes&\no&\no&\no&\yes&\yes&\yes&\makecell{Optimal,\\Column \& Constraint Generation}\\
\hline
\cite{wen2020joint} & Min avg. energy consumption &\no&\no&\no&\yes&\yes$^{5a}$&\no&\yes$^{5b}$&\yes&\yes&\no&\yes&\yes&\yes&\yes&\makecell{Heuristic,\\ADMM}\\
\bottomrule
\end{tabularx}
\end{adjustwidth}
\footnotesize{
$^1$ Maximum end-to-end delay which traffic packets of a service can tolerate.\\ 
$^2$ For the case when there is no contention of computational resources, \ie, when there is abundant computational resources or or when each edge node can store up to one service. 
$^{3a}$ Traffic is equally divided between the edge nodes which have cached the service, $^{3b}$ Long-term average delay constraint. 
$^4$ Average delay constraint or delay budget. 
$^{5a}$ One edge node, $^{5b}$ Symmetric deadline. }
\end{table*}
}}


There exists extensive literature on resource management for multi-tier architectures (consisting of end user devices, edge nodes or aggregation points, and the cloud), all seeking the overall goal of taking full advantage of the distributed storage and the computation and communication capabilities of the multi-tier architectures, while assuring QoS requirements of services and resource constraints.

A first body of works concentrate on optimal content placement in cache hierarchies, \eg, see  \cite{borst2010distributed, ahmadi2020cache}. In particular, \citet{borst2010distributed} consider cooperative caching scheme under storage constraints, and \citet{ahmadi2020cache} investigate optimal memory-bandwidth trade-off given unit bandwidth and storage prices. However, these works consider only storage and communication constraints and do not take computation-intensive services into account.  
Relevant to our work are also studies that consider computation/task offloading from resource-constrained end-user devices to the edge nodes (a.k.a. 
cloudlets)  or to the cloud in order to improve battery efficiency and reduce latency \cite{ndikumana2019joint, yang2019cloudlet, eshraghi2019joint}. Typically, these works artificially assume that the heterogeneous execution platforms and softwares required by the tasks are readily available at the edge nodes too (similar to the cloud). As a result, the task can either be executed locally or remotely (\ie, on the edge or on the cloud) depending on the objective and constraints.
\citet{ndikumana2019joint} consider just big data tasks, and propose an $\epsilon$-optimal cooperative input data caching and computation offloading mechanism to maximize the outsourced bandwidth saving and minimize the latency inside each cluster of the edge nodes under storage, computation, and serving delay constraints.       

The closest  works to ours are those considering joint optimization of service placement (or caching) and request scheduling between edge nodes and the cloud, e.g., see \cite{ poularakis2020service, farhadi2021service, xu2018joint, ma2020cooperative, somesula2023cooperative, wen2020joint, nguyen2021two, wei2021joint, zhang2019dmra, pasteris2019service}. Some of these works consider only one specific type of service, \eg, see \cite{nguyen2021two} and \cite{wei2021joint}. The works in  \cite{zhang2019dmra} and \cite{pasteris2019service} further assume abundant storage and computational resources, respectively, which are  not realistic. The works in \cite{ poularakis2020service, farhadi2021service, xu2018joint, ma2020cooperative, somesula2023cooperative, wen2020joint} consider multiple 
services under storage, computation, and communication constraints. In more details, \cite{poularakis2020service} and \cite{farhadi2021service} try to maximize the number of served requests by edge servers. 
However, they ignore the latency, and hence, their proposed schemes are not appropriate for delay-sensitive applications. \citet{xu2018joint} propose a Lyapunov-based dynamic solution to minimize the long-term average delay under long-term average delay and energy constraints. Likewise, \citet{ma2020cooperative} proposes a cooperative solution based on Gibbs Sampling \cite{lynch2007introduction} to minimize the combination of delay and outsourcing traffic to the cloud. 
\citet{chen2021latency} consider a three-layer architecture where non-admitted requests can be executed locally, and targets minimizing average delay. \citet{wen2020joint} consider a similar network structure, but with one edge node, where all the tasks arrive at the same time and have to be completed by a certain time. Compared to the  previous work, they also consider software fetching and multicasting for local execution, and devise a heuristic algorithm based on alternating direction method of multipliers (ADMM) \cite{boyd2011distributed} to minimizes the average energy consumption.  

It should be noted that most of the works mentioned above 
assumes constant serving and network delays, regardless of workload \cite{nguyen2021two, wei2021joint, poularakis2020service, farhadi2021service, wen2020joint, somesula2023cooperative}. 
However, practical scenarios involve stochastic request arrivals, leading to delays influenced by queue dynamics.
\citet{xu2018joint} model each edge node as one M/G/1 queue which is shared between different service types. 
Despite its higher utilization and lower average latency, this mechanism lacks service isolation (thus, all the services experience the same waiting delay), making it unsuitable for services with strict delay requirements. 
\citet{ma2020cooperative} consider different M/M/1 queues with a specific service rate for each service on an edge node.They also model the congestion on the backhaul link with an M/M/1 queue. However, the amount of computational resources allocated to each service (named as the CPU size in this paper) is not optimized.  
\citet{chen2021latency} allocate a queue of type M/G/1 per service and per mobile device on the edge. 
Additionally, to characterize backhaul congestion, they consider an M/G/1 queue per service and per edge node. Moreover, taking the warm-up time into account, they assume service times follow a shifted exponential distribution. 
Note that all the aforementioned works assume negligible serving latency at the remote cloud due to the  abundant resources. 
However, 
taking the economic aspect of the problem 
 into consideration, this assumption is unrealistic as the SP is required to pay for used resources. 
 
So far, support for delay-sensitive services has attracted less attention in the literature. It should be noted that even minimizing the average delay \cite{ma2020cooperative, chen2021latency} or considering some average delay constraints \cite{xu2018joint, nguyen2021two} can not guarantee individual stringent delay requirements of different services.  
Moreover, although \citet{wen2020joint} concentrate on delay-sensitive services, 
they overlook the impact of random request arrivals. Additionally, their approach assumes identical delay requirements for all services. Yet, 
none of the previous works has characterized computational delay in the problem formulation.

Finally, regarding the solution strategy, given the inherent complexity of the optimization problems within the realm of MEC,  existing works predominantly propose heuristic solutions, which may not demonstrate high efficiency in practical applications. There are limited attempts to devise near-optimal solutions in this context. Notably, works in \cite{zhao2018benders, yang2019cloudlet} utilize linear Benders decomposition to address some formulated mixed-integer linear programming (MILP) problems in this domain. However, their theoretical performance guarantees are limited to linear problems, while the comprehensive MEC optimization problem considered in this work is a challenging mixed-integer non-linear (even non-convex) problem. Another 
strand of works employs randomized rounding approximation techniques for service placement and request routing \cite{poularakis2020service, somesula2023cooperative, xu2022near}. Nevertheless, their theoretical performance guarantees pertain to linear problems only, and their approximation solutions still exhibit a significant optimality gap. Consequently, the quest for optimal or close-to-optimal solutions for  optimization of MEC systems design remains unfulfilled, which stands as one of the pivotal  motivations driving the focus of this paper.
Table I provides a comprehensive comparison of the works most closely related to ours.

\begin{table}[h!] 
\caption{Frequently Used Notations.}
\centering
\begin{tabular}{l p{0.7 \columnwidth} }
 Symbol & Description  \\ \hline
 \multicolumn{2}{c}{Parameters} \\\hline
 $e \in \mathbb E$ & BS index\\ 
 $n \in \mathbb E \cup \{c\}$ & server  index ($ c $ denotes the cloud) \\ 
 $l_{en} \in \{0, 1\}$&indicates whether BS $e$ and server $n$ are connected \\
 $s \in \mathbb S$ & service index \\
 $C_e$ & storage limit at $\mathrm{BS}_e$ \\
 $B_e$ & communication limit at $\mathrm{BS}_e$\\
 $B^l_{en}$ & uplink/downlink communication limit at link $en$\\
 {{$M_e$}} & computation limit at $\mathrm{BS}_e$ \\
 $ d^{\text{prp}}_{en}$ & propagation delay of link $en$\\
 $\lambda_{es}$  & demand for service $s$ arrived at BS $e$ \\
 $\sigma_{s}$& size per replica of service $s$ \\
 $\beta_{s}$& size of input/output data per request of $s$\\
 $\nu_{s}$& average  computation complexity per request of $s$ \\
 $D^{\text{max}}_{s}$& target delay of service $s$ \\
 $p^{\textit{cpu}}_{n}$& unit cost of computation at server $n$\\
  $p^{\textit{str}}_{n}$& unit cost of storage at server $n$\\
 $p^{\textit{trn}}_{en}$ & unit traffic cost at link $en$ \\\hline
 \multicolumn{2}{c}{Decision Variables} \\\hline
 $x_{ns} \in \{0, 1\}$ & placement indicator variable of service $s$ at server $n$\\
 $y_{sen} \in [0,\,1]$ & 
 the fraction 
 of  requests for service $s$ 
 at server $e$ that is delegated to server $n$\\
 $u_{ns} \in \mathbb{R}^+ $& CPU size allocated to service $s$ at node $n$ \\\hline
  \multicolumn{2}{c}{Auxiliary Variables} \\\hline
$D^{\text{srv}}_{ns} \in \mathbb{R}^+$& average service delay of requests for service $s$ at server $n$ \\
$D^{\text{cng}}_{en} \in \mathbb{R}^+ $ & average congestion delay of link $en$\\
$\Lambda_{ns} \in \mathbb R^+$ & total requests for service $s$ delivered to server $n$\\
 $\Lambda^l_{en} \in \mathbb R^+$ & uplink/downlink traffic passing over link $en$\\ 
 $z_{sen} \in \{0, 1\}$& an indicator representing whether any request for service $ s $ originating at BS $ e $ is transferred to server $ n $  
 (i.e, $ z_{sen} = \mathds{1} [y_{sen}>0] $).
 \\\hline
\end{tabular}
\label{table1}
\end{table}

\section{System Model}\label{sec:model}

We study service placement, CPU sizing, and request admission and scheduling in a cooperative MEC system provided by one or possibly multiple mobile networks. This problem is modeled taking the perspective of a service provider (SP) which delivers MEC-enabled delay-sensitive services across these networks.
MNOs deliver edge-located computation and storage resources. These facilities are pre-installed and collocated with the operators' network, \eg, at BSs and aggregation sites  \cite{hu2015mobile}. 
SP should pay for used resources across the MEC network and the cloud to the MNOs and the cloud provider, respectively.  

We consider a 2-tier network structure consisting of a remote cloud $c$ with abundant resources and a cooperative MEC system with a set of BSs, denoted by $\mathbb E=\{1,\cdots, E\}$,  distributed at fixed locations in a large service region, as illustrated in \figurename~\ref{fig:network}. Let $n \in \{\mathbb E, c\}$ denote the index of a server 
(either a BS or the cloud). 
The BSs are connected to their neighbouring BSs via low latency, high bandwidth LAN connection, and to the remote cloud via backhaul links. The existence of such a direct link between each $\mathrm{BS}_e$ and server $n$ is denoted by an indicator variable $l_{en}$, and  $B^l_{en}$ indicates the bandwidth capacity of the link. Here, all the BSs are assumed to be connected to the central remote cloud, and hence, we have $l_{ec}=1$ for all $e \in \mathbb E$.


Each $\mathrm{BS}_e$ is equipped with limited storage (hard disk), computation, and communication capacities indicated by $C_e$, $M_e$, and $B_e$, respectively. SP can lease storage and computation resources at server $n \in \{\mathbb E, c\}$ at unit costs $p^{\text{str}}_{n}$ and $p^{\text{cpu}}_n$, respectively. 
MNOs also charge SP at $p^{\text{trn}}_{en}$ per unit of bandwidth consumption between $\mathrm{BS}_e$ and server $n$. 
The computation cost at the cloud is typically cheaper than the  BSs. Furthermore, the aforementioned costs can also be different across different BSs depending on the location, owner MNO and its reputation, reliability, etc.    



The SP provides a library of $\mathbb S=\{1, \cdots, S\}$ latency-sensitive services, \eg, online gaming, autonomous driving, video streaming, which are heterogeneous in terms of storage, computation, and communication requirements \cite{huawei}. For clarity, we focus on the downlink traffic in this paper, but the proposed method can be extended to take into account uplink as well without loss of generality. All BSs might receive requests for all the services and demand is assumed independent of service placements. We assume requests for each service $s$ arriving at $\mathrm{BS}_e$  
follow a Poisson process with average rate $\lambda_{es}$, which is a common assumption in existing works \cite{xu2018joint, ma2020cooperative}. Each service $s$ requires $\sigma_s$ storage units per replica to store its codebase and  required databases and its output data size per request is $\beta_s$. 
Furthermore, services can have different processing requirements depending on their computational complexity. The computational capability (in CPU cycles) required to process one request of service $s$ is assumed to be an exponential random variable with average $\nu_s$. 
Finally, services come from different applications with divergent delay requirements. For example, an online gaming service can tolerate less delay than a video streaming service.
We consider a target delay of $D^{\text{max}}_s$ for each service $s$, 
which defines the maximum tolerable end-to-end delay for requests of service $s$. 
Table~\ref{table1} shows the list of our frequently  used  notations.

\section{Problem formulation}\label{sec:problem}

Given the knowledge of the service request arrivals and MEC topology, the SP should decide about \begin{enumerate*}[label=\itshape\roman*\upshape)]
    \item service placement, \ie, which servers should run each type of service, \item workload admission and scheduling, \ie, which fraction of the received workload at each BS should be admitted and how it should be distributed across the neighboring BSs and the cloud, and \item CPU sizing, \ie, how much computation capacity should be allocated to each virtual machine for running its associated 
    service.
\end{enumerate*}
Following the previous works (\eg, see \cite{poularakis2020service}), we assume that the aforementioned  decisions are taken at the beginning of a long-time window during which the request rates are assumed to be fixed and predictable. In the following, we formalize these decisions, and introduce the objective function  and the system constraints in the problem of joint service placement, request admission and scheduling, and CPU sizing in a cooperative MEC system. 

\subsection{Decision  Variables and Constraints}

\textbf{Service Placement.}
Service placement is defined by the following matrix  
\begin{equation*}\label{eq:variable:x}
{\mathbf{X}} = [x_{ns} \in \{0, 1\}: n \in \{\mathbb E, c\}, s \in \mathbb S ],
\end{equation*}
where $x_{ns} = 1$ if service $s$ is stored in server $n$ and $x_{ns} = 0$ otherwise. Note that since the cloud has all the services, we always have $ x_{cs} = 1, ~\forall s $. Clearly, the total size of the  services placed on each $\mathrm{BS}_e$ should not exceed its storage capacity, \ie, 
\begin{equation}\label{eq:capacity:storage}
\sum_{s \in \mathbb{S}} \sigma_s x_{es} \leq C_e, \forall e.
\end{equation}

\textbf{Request Admission and Scheduling.}
For each service and each BS, the SP needs to decide which fraction of requests to admit, and how to schedule the admitted ones across servers, depending on the target delay of the service, its required resources, and leasing costs. 
Request scheduling allows each BS to delegate any fraction of arrived requests for a service to the cloud or any of its neighbouring BSs that have that service,
allowing flexibly and efficiently utilizing the computational and storage capacities of the entire network.

In order to formalize the request admission and scheduling, we define a service delegation matrix as 
\begin{equation*}\label{eq:variable:y}
{\mathbf{Y}} = [y_{sen} \in [0, 1]: s \in \mathbb S, e \in \mathbb E, n \in \{\mathbb E, c\}],
\end{equation*}
where $y_{sen}$ indicates the fraction of requests for service $s$ arrived at $\mathrm{BS}_e$ that are admitted and delegated to server $n$ for being served. 
According to the above definition, clearly, at each $\mathrm{BS}_e$ and for each service $s$, we have the following normalization constraint over different associated delegation control variables, which indicates the sum of admitted requests cannot exceed $1$:
\begin{equation}\label{eq:sum1}
\sum_{n \in \{\mathbb{E}, c\}} y_{sen} \leq 1, \forall s, e.
\end{equation}
Moreover, the communication capacity constraint of each $\mathrm{BS}_e$ should also be considered in the request admission, as follows
\begin{equation}\label{eq:capacity:bw}
\sum_{s \in \mathcal S}  \beta_s \sum_{n \in \{\mathbb E, c\}}  \lambda_{es} y_{sen} \leq B_e, \forall e.
\end{equation}
Finally, $\mathrm{BS}_e$ can delegate  requests for service $s$ to a server $n$ only if the service has been stored in that server and the server is  a neighbouring BS or the cloud, \ie,  
\begin{equation}\label{eq:routing}
y_{sen} \leq x_{ns} l_{en}, \forall s, e, n.
\end{equation}

\textbf{CPU Sizing.} Let matrix $\mathbf{U}$ represent the CPU sizing decision variable as 
\begin{equation*}\label{eq:variable:mu}
    {\mathbf{U}} = \{u_{ns} \in \mathbb R_+: n \in \{\mathbb E, c\}, s \in \mathbb S\},
\end{equation*}
where $u_{ns}$ represents the computation capacity (\eg, in GHz) reserved on server $n$ 
for running service $s$. 
At each BS, the computation capacity allocations should preserve the BS's computation capacity constraint, \ie, a feasible CPU sizing should satisfy the following constraint
\begin{align}
    \sum_{s \in \mathbb S} u_{es} &\leq M_e, \forall e.\label{eq:capacity:mu}
\end{align}
Note that the SP leverages virtualization technologies supported by the MEC system (\eg, virtual machine (VM), container, etc)  to deploy each service in isolation on the general-purpose server \cite{mecAndSlice}. This separation will ensure the safe operation, security, and privacy of running services, especially in terms of services with rigid QoS requirements or with access to customer information.

\subsection{Delay Constraints} 

When a request for service $s \in \mathbb S$ is being served by server $n \in \{\mathbb E, c\}$, the end-to-end latency that it experiences should be less than the target delay of service $s$, \ie, $D^{\text{max}}_s$. 

The end-to-end latency that  a service request experiences includes three parts: 1) \textit{the propagation delay:} the round-trip propagation time of the direct link between the original BS at which the request has arrived and the destination server which is going to serve the request, 2) \textit{the congestion delay:} the total time it takes for the request to be successfully transmitted to the destination server on the interconnecting link between the two servers, and 3) \textit{the service delay:} the processing delay at the destination server.  
We note that 
similar to the previous works (e.g., see \cite{yang2019cloudlet, nguyen2021two, ma2020cooperative}), the round-trip delay between a  user and its covering BS is assumed to be negligible. 

Let $D^{\text{srv}}_{ns}$ and $D^{\text{cng}}_{sen}$ denote the service delay of service $s$ at server $n$, and its congestion delay on link $en$, respectively, and $d^{\text{prp}}_{en}$ be the propagation delays of link $en$. 
Therefore, the target delay constraint can be represented as, 
\begin{equation}\label{eq:delay}
     (D^{\text{srv}}_{ns} + d^{\text{prp}}_{en} + D^{\text{cng}}_{sen}) \mathds{1}_{[y_{sen}>0]}  \leq D^{\text{max}}_s, \,\forall s, e , n, 
\end{equation}
where $\mathds{1}_{[y_{sen}>0]}$ is an indicator function which equals to $1$ only when $y_{sen}>0$.

In the rest of this subsection, we derive the delay terms in the left hand side of the delay constraint in \eqref{eq:delay}. It should be noted that 
the propagation delay of all the network links are always fixed and can be assumed to be known, while the service and congestion delays highly depend on the admisison and scheduling policy. Therefore, it remains to derive the expressions of $D^{\text{srv}}_{ns} $ and $D^{\text{cng}}_{sen}$ as functions of the problem parameters and control variables. For this purpose, first note that as the arrival of requests for each service type $s$ at each $\mathrm{BS}_e $ is assumed to follow an independent Poisson process, with parameter denoted by $\lambda_{es}$,
the forwarded requests of service $s$ from any $\mathrm{BS}_e $ to a server $n$ is also a Poisson process with the new parameter $\lambda_{es}y_{sen}$, and hence,  the total requests for service $s$ served at server $n$ can also be modeled as a Poisson process with a rate of $\Lambda_{ns} = \sum_{e \in \mathbb E}\lambda_{es}y_{sen} $.
Similarly, the total downlink traffic over link $en$ follows a Poisson process with rate $\Lambda^l_{en} = \sum_{s \in \mathbb S}\beta_s \lambda_{es}y_{sen}$.


From the definitions of $\nu_s$ and $ u_{ns}$, it follows that the processing time of each request for service $s$ at server $n$  is an exponentially-distributed random variable with the rate of $u_{ns}/\nu_s$. Therefore, the service delay of the requests for $s$ at server $n \in \{\mathbb E,c\}$ can be estimated by the average response time of an M/M/$1$ queue \cite{ross2014introduction} as $ D^{\text{srv}}_{ns} =   \frac{1}{u_{ns}/\nu_s - \Lambda_{ns}}, \forall n, s $. 
Likewise, any link $en$ can be modeled as an M/M/1 queue with arrival rate of $\Lambda^l_{en}$ and service rate of  $B^l_{en}$. 
Consequently, the congestion delay at link $en$ is derived as $ D^{\text{cng}}_{sen} =   \frac{\beta_s}{B^{l}_{en}-  \Lambda^l_{en}}, \forall s, e, n, n\neq e $, 
and $D^{\text{cng}}_{see} =0, \forall s, e$.
\footnote{
Please note that, assuming symmetric uplink/downlink bandwidth for network links and symmetric input/output size for services, the round-trip congestion delay at link $en$ can be considered as twice the congestion delay of the uplink or downlink, as demonstrated above. Nevertheless, the computation of the congestion delay in the asymmetric case is also straightforward.}

Finally, note that to ensure the stability of the aforementioned queues, we need the following two constraints: 
\begin{align}
    \Lambda_{ns} &\leq u_{ns}/\nu_s, \forall n,s, \label{eq:stability:node}\\
     \Lambda^l_{en} &\leq B^l_{en}, \forall e, n, n\neq e. \label{eq:stability:link}
\end{align}

\subsection{The Target Optimization Problem}
We aim at maximizing the SP's total profit gained from serving the services while minimizing the total cost of leased storage, computation, and communication resources. The services may belong to usecases with different latency, priority, and resource requirements. The SP will account for these differences while monetizing services.

The SP's total serving revenue per unit of time can be expressed as
\begin{equation}\label{eq:revenue}
    \mathcal R_{\text{srv}} ({\bf{Y}}) = \sum_{s \in \mathbb S} \sum_{n \in \{\mathbb{E}, c\}}  R^{\textit{srv}}_{ns} \left(  \Lambda_{ns} \right), 
\end{equation}
where $R^{\textit{srv}}_{ns}$ is a twice continuously differentiable, 
concave and increasing function of the  total  requests  for service $s$ served at server $n$. Note that the above assumptions on the revenue function can capture a lot of interesting cases such as $\alpha$-fairness and proportional fairness \cite{omidvar2018optimal}.

Moreover, the total cost includes the costs of leased storage, computation, and communication resources, which are described per unit of time, respectively, as:
\begin{align}
\mathcal P_{\text{str}}({\mathbf{X}}) &= \sum_{s \in \mathbb{S}} \sigma_s \sum_{n \in \{\mathbb{E}, c\}}  p^{\text{str}}_n  x_{ns},\label{eq:costs:str}\\
\mathcal P_{\textit{cpu}}({\mathbf{U}}) &= \sum_{s \in \mathbb{S}} \sum_{n  \in \{\mathbb E, c\}} p^{\textit{cpu}}_n   u_{ns},\label{eq:costs:cpu}\\
\mathcal P_{\text{trn}}({\mathbf{Y}}) &= \sum_{e \in \mathbb E}\sum_{n \in \{\mathbb E, c\}}  \Lambda^l_{en}p^{\textit{trn}}_{en}. \label{eq:costs:trn}
\end{align}
Accordingly, the SP's total profit  
can be derived as 
\begin{align}
  \hspace{-4 pt}  \mathcal{U}_{\text{total}} \left( \mathbf{X},\mathbf{Y},\mathbf{U} \right)
    =
    \mathcal R_{\text{srv}}({\mathbf{Y}})  -\mathcal P_{\text{cpu}}({\mathbf{U}}) -\mathcal P_{\text{trn}}({\mathbf{Y}}) - \mathcal P_{\text{str}}({\mathbf{X}}).
\end{align}
Finally, the problem of joint service placement, request admission and scheduling, and CPU sizing for the SP's profit maximization can be formulated as
\begin{align}\label{eq:orig. opt.}
\mathcal{P}_{org}: \quad    &\max_{\mathbf{X,Y,U}} 
\mathcal{U}_{\text{total}} \left( \mathbf{X}, \mathbf{Y},\mathbf{U} \right)
\notag\\
    &\text{subject to}\, \eqref{eq:capacity:storage}-\eqref{eq:delay}, \eqref{eq:stability:node},\eqref{eq:stability:link}.
\end{align}

\subsection{Problem Complexity Analysis}

Problem $ \mathcal{P}_{org} $ includes both integer variables $\mathbf{X}$ and continuous variables ($\mathbf{Y}, \mathbf{U}$), and hence, it is a mixed-integer non-linear programming (MINLP) problem. 
In the following, we will describe a simple instance of $\mathcal{P}_{org}$ and prove that it remains NP-hard even in this simplified case. This will, in turn, demonstrate that $ \mathcal{P}_{org}$ is NP-hard in the general form \eqref{eq:orig. opt.}, and hence, cannot be solved in polynomial time.   

Consider a simple case where we do not take account of cost terms   \eqref{eq:costs:str}-\eqref{eq:costs:trn} in the objective function, \ie, all unit leasing costs $p^{str}_n$, $p^{\textit{cpu}}_{n}$, and $p^{\textit{trn}}_{en}$ are assumed to be $0$, and we consider linear and homogeneous revenue functions for each $s$, \ie, $R^{\textit{srv}}_{ns}=R^{\textit{srv}}_{s}=r^{\textit{srv}}_{s} \Lambda_{ns}, \forall n, s$. Setting capacities $B_e$, $M_e$, and $B^l_{en}$, $\forall e, n$, large enough eliminates constraints \eqref{eq:capacity:bw}, \eqref{eq:capacity:mu}, and  \eqref{eq:stability:link}, respectively, rules out decision variable matrix ${\mathbf{U}}$ and associated constraint \eqref{eq:stability:node}, and decrease delay terms $D^{\text{srv}}_{ns}$ and $D^{\text{cng}}_{sen}$ to zero in \eqref{eq:delay}. 
Moreover, we assume all BSs are connected by direct links with zero propagation delay, \ie, $l_{ee'} = 1$ and $d^{\text{prp}}_{ee'}=0, \forall e, e' \in \mathbb E$, and $d^{\text{prp}}_{ec} > \max_{s \in \mathbb S} {D^{\text{max}}_s}, \forall e$. 
Based on aforementioned assumptions,  \eqref{eq:delay} and \eqref{eq:routing} are satisfied for any $n \in \mathbb E$ and \eqref{eq:routing} is violated for $n=c$.
Therefore, in the optimal solution, no service will be placed on the cloud, and there is at most one instance for each service placed on all BSs, \ie,  $\sum_{e \in \mathbb E}{x_{es}} \leq 1, \forall s$. If $x_{es}=1$, all request for that service will be delegated to $e$, \ie, $y_{se'e}=1, \forall e' \in \mathbb E$, and $y_{se'e}=0$ otherwise. Thus, we have $y_{se'e}=x_{es}, \forall s, e, e'$, and   
$\mathcal{P}_{org}$ can be rewritten as
\begin{align}
\mathcal{P}_{sp}: \quad&\max_{{\mathbf{X}} } \sum_{s \in \mathbb S} \sum_{e \in \mathbb{E}}  r^{\textit{srv}}_{s} x_{es} \sum_{e' \in \mathbb{E}} \lambda_{e's},\notag\\
\text{subject to}\, &\eqref{eq:capacity:storage}.\notag
\end{align}

\begin{theorem}[Problem Complexity]
Problem $\mathcal{P}_{\text{sp}}$ (which is a simple instance of $\mathcal{P}_{org}$) is NP-hard. 
\end{theorem}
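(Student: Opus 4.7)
The plan is to establish NP-hardness of $\mathcal{P}_{sp}$ by a polynomial-time reduction from the classical 0-1 Knapsack problem, which is well-known to be NP-hard. The key observation is that even the extremely restricted single-BS case of $\mathcal{P}_{sp}$ already captures 0-1 Knapsack in full generality, so it suffices to embed an arbitrary Knapsack instance into this special case.

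Concretely, given an arbitrary 0-1 Knapsack instance specified by $n$ items with positive values $v_1,\ldots,v_n$, weights $w_1,\ldots,w_n$, and capacity $W$, I would construct the following instance of $\mathcal{P}_{sp}$ in $O(n)$ time: take a single BS ($|\mathbb{E}|=1$) with storage capacity $C_1 = W$; define the service library $\mathbb{S}=\{1,\ldots,n\}$; set the per-replica storage size $\sigma_s = w_s$; and choose $r^{\textit{srv}}_s$ and $\lambda_{1s}$ so that $r^{\textit{srv}}_s \lambda_{1s} = v_s$ (for example, $r^{\textit{srv}}_s = v_s$ and $\lambda_{1s}=1$). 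Under this construction, the only decision variables are $x_{1s}\in\{0,1\}$, the storage constraint \eqref{eq:capacity:storage} reduces exactly to $\sum_s w_s x_{1s} \leq W$, and the objective collapses to $\sum_s v_s x_{1s}$, matching the Knapsack objective term by term.

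The equivalence then follows immediately: feasible placements $\mathbf{X}$ for the constructed $\mathcal{P}_{sp}$ instance are in bijection with feasible item subsets of the Knapsack instance, and the objective values coincide. Hence, an optimal solution of one yields an optimal solution of the other, and since the reduction is polynomial in the input size and 0-1 Knapsack is NP-hard, so is $\mathcal{P}_{sp}$. Because $\mathcal{P}_{sp}$ is itself a special case of $\mathcal{P}_{org}$ (obtained by the parameter choices spelled out just before the theorem), this also implies NP-hardness of the original problem.

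I do not expect any substantive obstacle in this argument; the proof is essentially a direct embedding, and the only care needed is to verify that the chosen parameters (nonnegativity, matching of objective coefficients, and trivial satisfaction of the already-eliminated delay and stability constraints) are consistent with the definition of $\mathcal{P}_{sp}$ as derived in the preceding paragraph. If one wished to obtain a stronger conclusion such as strong NP-hardness or a hardness-of-approximation result, the natural alternative would be a reduction from Multiple Knapsack or the Generalized Assignment Problem over the multi-BS version; but for the stated theorem, reducing from 0-1 Knapsack via a single-BS instance is the shortest and cleanest route.
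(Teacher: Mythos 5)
Your proof is correct, but it takes a genuinely different (and more elementary) route than the paper. The paper reduces from the 0--1 \emph{multiple} knapsack problem: it maps each item $i$ to a service $s_i$ with $w_i=\sigma_{s_i}$ and $v_i=\sum_{e}\lambda_{es_i}$, maps each knapsack $j$ to a $\mathrm{BS}_j$ with $C_j=\hat c_j$, and argues that optimal placements correspond exactly to optimal disjoint assignments of items to knapsacks (using the fact that the optimal solution of $\mathcal{P}_{sp}$ places each service on at most one BS). You instead collapse to a single BS and reduce from ordinary 0--1 Knapsack, which makes the bijection between feasible placements and feasible item subsets immediate and sidesteps the ``each service on at most one BS'' bookkeeping entirely. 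Both reductions are valid and polynomial, and both establish the stated claim. What the paper's choice buys is that it exercises the multi-BS structure of $\mathcal{P}_{sp}$ and, since MKP with the number of knapsacks as part of the input is strongly NP-hard, it can in principle support the stronger conclusion of strong NP-hardness; your single-knapsack reduction yields only weak NP-hardness (Knapsack admits a pseudo-polynomial algorithm). For the theorem as stated, which claims plain NP-hardness, your shorter argument is fully sufficient, and your closing remark correctly identifies Multiple Knapsack (the paper's actual choice) as the natural source problem for a stronger result.
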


\begin{proof}
We prove $\mathcal{P}_{\text{sp}}$ is NP-hard by reduction from the NP-hard 0-1 multiple knapsack problem (MKP)  defined as \cite{martello1990knapsack}: Given a set of $n$ items, where the weight of the $i$th item is $w_i$ and its profit is $v_i$, and a set of $m$ knapsacks ($m < n$), each having a maximum capacity $\hat{c}_j$, select $m$ disjoint subsets of items $\mathcal S_j$ such that the total profit of selected items, \ie, $\sum_{j=1}^{m} \sum_{i \in \mathcal S_j}v_i$, is maximum, and each subset $S_j$ can be assigned to a different knapsack $j$ whose capacity is not less than the weight of all items in the subset, \ie, $\sum_{i \in \mathcal S_j}w_i \leq c_j$. 

For an instance of MKP($[v_i]$, $[w_i]$, $[\hat c_j]$), we construct an instance for the special case of $\mathcal{P}_{\text{sp}}$ as follows: we associate each item $i$ to service  $s_i$ with $v_i=\sum_{e \in \mathbb{E}} \lambda_{es_i}$ and $w_i=\sigma_{s_i}$. For each knapsack $j$, we also consider a $\mathrm{BS}_j$ with $C_j=\hat c_j$.  
Let ${\mathbf{X}}$ be the optimal solution to $\mathcal{P}_{\text{sp}}$ and $\mathcal S_j=\{i \in \{1, \cdots, n\}: x_{js_i}=1\}$ be the set of all placed services on $\mathrm{BS}_j$. Since optimal solution will place each service on at most one BS and will preserve constraint \eqref{eq:capacity:storage}, it is easy to see that ${\mathbf{X}}$ is the optimal solution to $\mathcal{P}_{\text{sp}}$ if and only if $S_j$ ($j=1,\cdots,m$) is the optimal solution of MKP. 
\end{proof}

\section{Solution Strategy}\label{sec:solution}

As mentioned in the previous section, the formulated problem is an MINLP, where the coupling between the integer 
and continuous variables complicates the problem. 
Moreover, due to 
the constraints of the form \eqref{eq:delay}, 
even relaxing the integer variables cannot help to address this problem in polynomial time, as the relaxed version of the problem still remains non-convex.
Consequently, the conventional mixed-integer linear programming (MILP) or convex programming approaches cannot handle this problem.   

To overcome the aforementioned challenges efficiently, we introduce a new design trade-off parameter for different delay requirements of each service, providing flexibility in prioritizing different service delay requirements. 
By exploiting a hidden convexity, we reformulate the delay constraints into an equivalent form. Next, to handle the complicating integer variables, we use primal decomposition, decomposing the problem into an equivalent master and inner problems, over   complicating and simple variables, respectively. We employ a cutting-plane approach to build adequate representations of the extremal value of the inner problem as a function of the complicating integer variables and the set of values of the integer variables for which the inner problem is feasible.

Finally, we propose a solution strategy based on generalized Benders decomposition (GBD) and prove its convergence to the optimal solution within a limited number of iterations. Before that, we will briefly review the GBD approach in the next subsection.

\subsection{Problem Reformulation and Decomposition}\label{sec:solution:reformulate}

First note that in the formulated problem  $\mathcal P_{org}$ involves integer variables  $\mathbf{X}$, which complicates the problem. Furthermore, since constraints \eqref{eq:delay} are non-convex, even with relaxing the integer variables (a  common 
step 
in the randomized rounding methods of previous works \cite{poularakis2020service, somesula2023cooperative}) or fixing the complicating variables, the problem still remains MINLP, which remains challenging. 
To tackle the aforementioned challenges, in the following, we first introduce a new design parameter 
that maintains a trade-off between different delay requirements of each service and hence, provides flexibility in prioritizing different service delay requirements. For a design parameter $\alpha \in [0, 1]$ , let $(1-\alpha) D^{\text{max}}_s$ and $\alpha D^{\text{max}}_s$  denote the maximum tolerable service  delay, and congestion plus propagation delays, respectively. By incorporating this design parameter in the delay constraints \eqref{eq:delay}, they will be transformed  into the following set of constraints: 
\begin{subequations}
\begin{align}
       D^{\text{srv}}_{ns} \mathds{1}_{[y_{sen} > 0]} & \leq  D^{\text{max}}_s, \, \forall s, e, n=e,\label{eq: delay const served at its original BS}\\
       D^{\text{srv}}_{ns}  \mathds{1}_{[y_{sen} > 0]} &\leq \left( 1- \alpha \right) D^{\text{max}}_s, \,\forall s, n \neq e,
\end{align}
\end{subequations}
\begin{align}
(D^{\text{cng}}_{sen} + d^{\text{prp}}_{en}) \mathds{1}_{[y_{sen} > 0]} &\leq \alpha D^{\text{max}}_s, \forall s, e, n \neq e, 
\end{align}
where constraint \eqref{eq: delay const served at its original BS} 
is the delay constraint for the case when the request is served at its original BS, i.e., the request encounters only a  service delay at its serving BS, but  
no network delay. 

It is worth noting that summing up the above delay constraints results in the previous constraint \eqref{eq:delay}. However, the above form of delay constraints provides a  more flexible design than that of our originally formulated delay constraint in \eqref{eq:delay}. In fact,  parameter  $\alpha$ maintains a design trade-off 
between the priority of satisfying 
different service requirements. 
For example, when $\alpha$ is set to a large value,  the SP allows to serve a request even if it has to go through a relatively saturated link (i.e., with high network delay) which in turn tightens the delay bound for service delay and hence,  increases the required computational resources. 
%
%
%

Note that the above set of delay constraints are still non-convex. To tackle this challenge, we first define a new set of auxiliary variables $ \mathbf{Z} = [ z_{sen} \in \{ 0,1\}:  s \in \mathbb{S} , e \in \mathbb{E}, n \in \left\{ \mathbb{E}, c \right\} ] $, in which  $z_{sen}$ indicates 
whether any request for service $ s $ originating at BS $ e $ is transferred to server $ n $  
 (i.e, $ z_{sen} = \mathds{1} [y_{sen}>0] $). 
 Then, we can reformulate the above non-convex delay constraints into an equivalent linear form, as follows:
\begin{subequations}
\begin{align}
    z_{sen}&
    \leq D^{\text{max}}_s (u_{ns}/\nu_s - \Lambda_{ns}), \,\forall s, e, n=e, \label{eq:const:delay2_a} \\
    z_{sen}&
    \leq (1-\alpha) D^{\text{max}}_s (u_{ns}/\nu_s - \Lambda_{ns}), \,\forall s, e, n\neq e, \label{eq:const:delay2_b}
\end{align}
\end{subequations}
\begin{align}\label{eq:const:delay2_c}
    z_{sen}
    \leq \max \{\alpha D^{\text{max}}_s - d^{\text{prp}}_{en}, 0\} (B^l_{en} - \Lambda^l_{en}),\, \forall s, e, n \neq e,
\end{align}
\begin{equation}\label{eq:const:delay2_d}
    y_{sen} \leq z_{sen}, ~ \forall s,e,n, 
\end{equation}
where inequality \eqref{eq:const:delay2_c}  
is to ensure that $y_{sen}=0$ when $\alpha D^{\text{max}}_s < d^{\text{prp}}_{en}$. Note that this happens when the parameter $\alpha$ is set to a very small value or the associated service is so delay-sensitive that no network delay is tolerable for serving the requests, and hence, no service request can be delegated from its original BS to another server.  
Moreover,    inequality \eqref{eq:const:delay2_d} is to  guarantee that no portion of the requests for service $s$ arrived at $\mathrm{BS}_e$ can be served by server $n$ if  it is not transferred to that server. 

Finally, we note that having constraint  \eqref{eq:const:delay2_d}, we can equivalently replace constraint \eqref{eq:routing} (which includes both complicating and simple variables)  with the following one (which includes the complicating variables only):
\begin{align}\label{eq:const:delay2_e}
    z_{sen} \leq x_{ns} l_{en},~\forall s,e,n.
\end{align}

Now, considering the new form of the delay constraints as stated in \eqref{eq:const:delay2_a}-\eqref{eq:const:delay2_d}, 
the reformulated target problem can be written as 
\begin{subequations}\label{eq: Org Prob with G}
\begin{align} 
 \mathcal P_{org}^{\text{reformulated}} : ~ &\max_{ \mathbf{X} , \mathbf{Z} }
\max_{ \mathbf{Y}  ,  \mathbf{U} } ~ \mathcal{U}_{\text{total}} \left( \mathbf{X}  ,\mathbf{Y},\mathbf{U} \right) \\ 
    \text{subject to} \,\, ~
    &\eqref{eq:capacity:storage}, \eqref{eq:const:delay2_e}, \\
    &\mathbf{G} \left( \mathbf{X} , \mathbf{Z} ,\mathbf{Y},\mathbf{U} \right) \geq \mathbf{0}, \\
    &\left( \mathbf{X} , \mathbf{Z} \right) \in D_{X,Z},\\
    & \left( \mathbf{Y} ,  \mathbf{U} \right) \in \mathcal D_{Y,U},
\end{align}
\end{subequations}
where 
$D_{X,Z} \triangleq \{0,1\}^{(E+1)\times S} \times \{0,1\}^{S \times E \times  (E+1)}$ defines the joint domain of service placement and auxiliary request transfer decision variables, and $D_{Y,U} \triangleq 
    [0,1]^{ S \times E \times (E+1) } \times \mathbb R_+^{(E+1) \times S}$ defines the joint domain of request scheduling and CPU sizing decision variables.   
    Finally,  $\mathbf{G} \left( \mathbf{X} , \mathbf{Z} , \mathbf{Y},\mathbf{U} \right)$ is 
the vector of constraint functions defined on  $
D_{X,Z} \times \mathcal D_{Y, U}$ and  determined by all the constraints that include the simple variables only or the constraints that include both the simple and complicating variables
,  i.e., 
 \eqref{eq:sum1}, \eqref{eq:capacity:bw}, \eqref{eq:capacity:mu}, \eqref{eq:stability:node}, \eqref{eq:stability:link}, and \eqref{eq:const:delay2_a}--\eqref{eq:const:delay2_d}.

Next, using primal decomposition \cite{palomar2006tutorial}, 
we decompose the above problem into an equivalent form of master and inner sub-problems that are over complicating and simple variables, respectively. The inner problem is simply the original problem for some fixed values of the complicating variables $ \left( \hat{\mathbf{X}}, \hat{\mathbf{Z}} \right) \in \mathcal D_{X,Z} $:  
\begin{align}  \label{eq: P_in}
\mathcal{P}_{in}:
\quad f \left( \hat{\mathbf{X}} , \hat{\mathbf{Z}}  \right) \equiv &\max_{ (\mathbf{Y} , \mathbf{U}) \in \mathcal D_{Y,U} } ~ \mathcal R_{\text{srv}}({\mathbf{Y}}) - \mathcal P_{\text{cpu}}({\mathbf{U}}) -\mathcal P_{\text{trn}}({\mathbf{Y}})\notag \\
    &\text{subject to}\,\, ~ \mathbf{G} \left( \mathbf{\hat X}, \hat{\mathbf{Z}} , \mathbf{Y},\mathbf{U} \right) \geq \mathbf{0}, 
\end{align}
and the outer problem, a.k.a. the master problem, is defined as 
\begin{align}
\mathcal{P}_{out}:
\quad    &\max_{ \left( \mathbf{X}, \mathbf{Z} \right) \in \mathcal \mathcal D_{X,Z} \cap V } v ( \mathbf{X}, \mathbf{Z} ) - \mathcal P_{\text{str}}({\mathbf{X}})  \\
    \text{subject to} \notag \\
    &\eqref{eq:capacity:storage}, \eqref{eq:const:delay2_e}, \notag 
\end{align}
where $ v( \mathbf{X}, \mathbf{Z} ) = \{\sup_{ \mathbf{Y},\mathbf{U} \in \mathcal D_{Y,U} } \mathcal R_{\text{srv}}({\mathbf{Y}})  -\mathcal P_{\text{cpu}}({\mathbf{U}}) -\mathcal P_{\text{trn}}({\mathbf{Y}})  ; ~\mathbf{G} \left( \mathbf{X}, \mathbf{Z} , \mathbf{Y},\mathbf{U} \right) \geq \mathbf{0} \}  $ and  $V\triangleq \{ \left( \mathbf{X}, \mathbf{Z} \right) : \mathbf{G} \left( \mathbf{\hat X}, \hat{\mathbf{Z}} , \mathbf{Y},\mathbf{U} \right) \geq \mathbf{0} \, ~ \text{for some}\, \left( \mathbf{Y},\mathbf{U} \right) \in \mathcal D_{Y,U} \}$, indicates the set of complicating variables for which there is a feasible solution to the original problem. 
%
As such, $ \mathcal D_{X,Z} \cap V $ can be thought of as the projection of the feasible region of 
 onto $\left( \mathbf{X}, \mathbf{Z} \right)$-space.



Note that according to primal decomposition \cite{boyd2004convex}, the above decomposition is optimal and the master problem is equivalent to the original problem in \eqref{eq: Org Prob with G}.  
Moreover, it can be easily verified that the inner problem is convex, and hence, can be solved in polynomial time using the conventional convex optimization approaches such as primal-dual interior point methods   \cite{boyd2009convex}.  
 However, the  function $ v ( \mathbf{X}, \mathbf{Z} ) $ and the set $V$ in the definition of the master problem are only known implicitly via their definitions. To deal with this challenge, in the following, we employ a cutting-plane approach to build adequate representations of the extremal value of the inner problem as a function of the complicating integer variables and the set of values of the integer variables for which the inner problem is feasible. 
For this purpose, the dual representations of $v$ and $V$ are adopted, and a relaxed version of the master problem is constructed in an iterative manner, which maintains an upper-bound to the solution of the original problem \cite{geoffrion1972generalized}, as follows:
\begin{subequations}
\label{eq: P relaxed master}
\begin{align}
    \mathcal{P}_{out}^{{\text{relaxed}}}:  
    & \max_{ \textsf{UB}, \left(  \mathbf{X}, \mathbf{Z}  \right) \in \mathcal D_{X,Z} }  \textsf{UB}  \\ 
    \text{subject to}& \notag \\
    & \textsf{UB} \leq L^\ast \left(  \mathbf{X}, \mathbf{Z}  ; \mathbf{\mu}^{k_1} \right) , ~ \forall {k_1} \in \left\{ 1,\ldots,\tau^{\text{feas}} \right\} , \label{eq: feas const} \\ 
    & L_\ast \left(\mathbf{X}, \mathbf{Z} ; \lambda ^{k_2} \right) \geq 0, ~ \forall {k_2} \in \left\{1,\ldots,\tau^{\text{infeas}} \right\}, \label{eq: infeas const}
\end{align}
\end{subequations}
where $ L^\ast \left(  \mathbf{X}, \mathbf{Z}  ; \mathbf{\mu}^{k_1} \right) $ and $ L_\ast \left(\mathbf{X}, \mathbf{Z} ; \lambda ^{k_2} \right) $ are Lagrange functions defined as 
\begin{align} \label{eq: L11}
 L^\ast \left(  \mathbf{X}, \mathbf{Z}  ; \mathbf{\mu} \right) \triangleq \sup_{ \mathbf{Y},\mathbf{U} \in \mathcal D_{Y,U}  }  \left\{ \mathcal{U}_{\text{total}}  \left( \mathbf{X}, \mathbf{Y},\mathbf{U} \right) + {\mathbf{\mu}}^T \mathbf{G} \left( \mathbf{X},  \mathbf{Z} , \mathbf{Y},\mathbf{U}  \right)  \right\},
\end{align}
where $\mathbf{\mu}$ 
is Lagrangian multiplier vectors corresponding to solving the inner problem $ \mathcal{P}_{in} $ when it is feasible, and
\begin{align} \label{eq: L21}
    & L_\ast \left(\mathbf{X}, \mathbf{Z} ; \mathbf{\lambda} \right)  \triangleq \sup_{ \mathbf{Y},\mathbf{U} \in \mathcal D_{Y,U} }  \left\{  {\mathbf{\lambda}}^T \mathbf{G} \left( \mathbf{X},  \mathbf{Z} , \mathbf{Y},\mathbf{U}  \right)   \right\},  
\end{align} 
where $ \mathbf{\lambda} $ is any vector satisfying 
\begin{align}\label{eq: feas lambda set}
\left\lbrace \mathbf{\lambda} \geq \mathbf{0},  \mathbf{\lambda}^T \mathbf{1} =1 , {\mathbf{\lambda}}^T \mathbf{G} \left( \mathbf{X},  \mathbf{Z} , \mathbf{Y},\mathbf{U}  \right) < 0 \right\rbrace . 
\end{align}
In the proposed method, the above relaxed master problem is iteratively constructed, in which,   $\tau^{\text{feas}}$ and $\tau^{\text{infeas}}$ are the counters for 
 the number of iterations in which the inner problem is feasible and infeasible, respectively.  
In fact, at each iteration, one constraint of the form of \eqref{eq: feas const}, i.e., \textit{optimality cut}, or  \eqref{eq: infeas const}, i.e., \textit{feasibility cut}, will be added to the constraints  of  the  relaxed  master  problem. 
Accordingly, after each iteration, $\tau^{\text{feas}}$ or $\tau^{\text{infeas}}$ will be incremented by one,
 depending 
 on the feasibility of the inner problem at that iteration. 

\begin{challenge} \label{ch: obtaining Lagrange func.s} 
It should be noted that solving the relaxed master problem in \eqref{eq: P relaxed master}  is still challenging. This is mainly due to the fact that obtaining the Lagrange functions still involves optimization over both simple and complicating variables. 
\end{challenge} 

\subsection{The Proposed 
Solution Strategy}\label{sec:solution:GBD}

First, to tackle  Challenge \ref{ch: obtaining Lagrange func.s}, in the following, we derive  explicit forms for the Lagrange functions in \eqref{eq: L11} and \eqref{eq: L21}, i.e., as explicit functions of $\left( \hat{\mathbf{X}} , \hat{\mathbf{Z}}  \right)$, resulting in bypassing the involved  optimization problems. 
For this purpose, first, note that for any fixed $\left( \hat{\mathbf{X}} , \hat{\mathbf{Z}}  \right)$ where the inner problem is feasible, the upper-bounding function defined in \eqref{eq: L11} can be written as 
\begin{align} \label{eq: L upper star}
 \hspace{-4 pt}   L^*\left ( \hat{\mathbf{X}} , \hat{\mathbf{Z}}  ; \mathbf{\hat \mu} \right) =  \max_{(\mathbf{Y}, \mathbf{U}) \in \mathcal D_{Y, U}}[~ & \mathcal R_{\text{srv}}({\mathbf{Y}}) - \mathcal P_{\text{cpu}}({\mathbf{U}}) -\mathcal P_{\text{trn}}({\mathbf{Y}})   \\
    & + \mathbf{\hat \mu}^T G ( \hat{\mathbf{X}} , \hat{\mathbf{Z}}  , \mathbf{Y},\mathbf{U} ) ~]
    - \mathcal P_{\text{str}} ( \hat{\mathbf{X}} )  , \notag
\end{align}
in which $\mathbf{\hat \mu}$ is the optimal Lagrange multiplier vector  obtained in solving the inner problem $\mathcal{P}_{in}$. 
Moreover, 
due to convexity of the inner problem $\mathcal{P}_{in}$ and Slater's condition (which is easy to verify in this problem) \cite{boyd2004convex}, 
the optimal solution to the above optimization equals to the optimal solution of the inner problem $\mathcal{P}_{in}$. 
Furthermore, according  to  complementary slackness \cite{boyd2004convex} 
for the inner problem, in the objective function \eqref{eq: L upper star}, the terms in  $\mathbf{\hat \mu}^T G( \hat{\mathbf{X}} , \hat{\mathbf{Z}} ,\mathbf{U})$ associated with the constraints including only the continuous variables are equal to zero. As such,   
only the terms associated with constraints 
\eqref{eq:const:delay2_a}--\eqref{eq:const:delay2_d} 
remain in the objective function \eqref{eq: L upper star}. 

According to the above discussions, if we let $\hat f$ denote the optimal value and $(\mathbf{\hat Y}, \mathbf{\hat U})$ denote the optimal solution of the inner problem, we have 
\begin{align} \label{eq: L super index star}
    L^* & \left(\mathbf{X}, \mathbf{Z}, \mathbf{\hat \mu} \right) =  \hat{f} - \mathcal P_{\text{str}}({\mathbf{X}})   \notag\\ 
    &+ \mathbf{\hat \mu_{\ref{eq:const:delay2_a}}}^T ~ [  D^{\text{max}}_s (\hat u_{ns}/\nu_s - \hat \Lambda_{ns}) - z_{sen}, \forall s, e, n=e ] \notag \\
    &+ \mathbf{\hat \mu_{\ref{eq:const:delay2_b}}}^T ~ [  (1-\alpha) D^{\text{max}}_s (\hat u_{ns}/\nu_s - \hat \Lambda_{ns}) - z_{sen}, \forall s, e, n\neq e ] \notag \\
    &+ \mathbf{\hat \mu_{\ref{eq:const:delay2_c}}}^T ~ [  [\alpha D^{\text{max}}_s - d^{\text{prp}}_{en}]^+ (B^l_{en} - \hat \Lambda^l_{en}) - z_{sen},  \forall s, e, n \neq e ] \notag \\
    &+ \mathbf{\hat \mu_{\ref{eq:const:delay2_d}}}^T ~ [ z_{sen} - \hat y_{sen},  \forall s,e,n ].
\end{align}

On the other hand, for any complicating variables $ ( \mathbf{X}, \mathbf{Z} ) $ 
where the inner problem is infeasible, 
first a feasibility problem defined in \eqref{eq: feas lambda set} should be solved to obtain $\mathbf{\lambda}$ and then, 
the 
objective function $ L_\ast $ in \eqref{eq: L21} can be obtained. 
For this purpose, first note that if the inner problem is infeasible, then there must exist $\mathbf{\lambda}$ satisfying \eqref{eq: feas lambda set} such that $ \max_{(\mathbf{Y}, \mathbf{U}) \in \mathcal D_{Y, U}} \left \{  {\mathbf{\lambda}}^T \mathbf{G} \left( \mathbf{X},  \mathbf{Z} , \mathbf{Y},\mathbf{U}  \right) \right \} < 0  $ (see \cite{geoffrion1972generalized} for details). 
To obtain such a $\mathbf{\lambda}$, we can  solve the following problem (by primal-dual optimization method) 
which derives  an equivalent form for  Problem  \eqref{eq: L21}: 
\begin{align}\label{eq: P_feas}
&\min_{ \mathbf{\lambda} \geq \mathbf{0} } \max_{(\mathbf{Y}, \mathbf{U}) \in \mathcal D_{Y, U}} \{ \mathbf{\lambda}^T \mathbf{G} \left( \mathbf{X}, \mathbf{Z}, \mathbf{Y}, \mathbf{U} \right)  
    \text{s.t.} ~  \mathbf{\lambda}^T \mathbf{1} =1  \}   ,  \notag \\
&=\max_{\gamma} \min_{ \mathbf{\lambda} \geq \mathbf{0} } \max_{(\mathbf{Y}, \mathbf{U}) \in \mathcal D_{Y, U}} \{ \mathbf{\lambda}^T \mathbf{G} \left( \mathbf{X}, \mathbf{Z}, \mathbf{Y}, \mathbf{U} \right) + \gamma ( \mathbf{\lambda}^T \mathbf{1} - 1 )  \}  , \notag \\
&=\max_{\gamma} \min_{ \mathbf{\lambda} \geq \mathbf{0} } \max_{(\mathbf{Y}, \mathbf{U}) \in \mathcal D_{Y, U}} \{ - \gamma + \mathbf{\lambda}^T ( \mathbf{G} \left( \mathbf{X}, \mathbf{Z}, \mathbf{Y}, \mathbf{U} \right) + \gamma \mathbf{1} ) \}  , \notag \\   
&=\min_{\gamma, (\mathbf{Y}, \mathbf{U}) \in \mathcal D_{Y, U} }  \{ \gamma \}  ~~\text{s.t.} ~  \mathbf{G} \left( \mathbf{X}, \mathbf{Z}, \mathbf{Y}, \mathbf{U} \right) + \gamma \mathbf{1} \geq 0 .   
\end{align}
%
The first equality holds due to convexity of the set $ \left \{ \mathbf{\lambda} \geq \mathbf{0}: ~ \mathbf{\lambda}^T \mathbf{1}=1 \right \} $, and the third equality holds due to the convexity of $ \mathbf{G} \left( \mathbf{X}, \mathbf{Z}, \mathbf{Y}, \mathbf{U} \right)  $ with respect to $ \left( \mathbf{Y}, \mathbf{U} \right) $. Then, since $ \mathcal D_{Y, U} $ is a convex set, the solution to this problem is easily obtained (e.g., using conventional convex optimization methods such as primal-dual interior point method \cite{boyd2004convex}) and its associated multipliers compose the desired $ \mathbf{\lambda} $.
We refer to the optimal solution of this problem and its associated Lagrange multipliers as $(\bar{\mathbf{Y}}, \bar{\mathbf{U}})$ and $\hat{\mathbf{\lambda}}$, respectively, and obtain the feasibility function defined in \eqref{eq: L21} as
\begin{align}\label{eq: L subscript star}
    L_* & \left( \mathbf{X}, \mathbf{Z}  ; \mathbf{\hat \lambda} \right) = 
\hat{\mathbf{\lambda}}^T \mathbf{G} \left( \mathbf{X}, \mathbf{Z}, \bar{\mathbf{Y}}, \bar{\mathbf{U}} \right)     
\end{align}

Now, using the explicit forms of $ L^\ast $ and $ L_\ast $ as functions of $ \left( \mathbf{X} , \mathbf{Z} \right) $, as obtained in \eqref{eq: L super index star}, and \eqref{eq: L subscript star} and substituting them in \eqref{eq: L11} and \eqref{eq: L21}, respectively, we can easily construct and solve the relaxed master problem \eqref{eq: P relaxed master}. 
Note that, as aforementioned, the relaxed master problem maintains an upper-bound to the solution of the original problem. In addition, clearly, the inner problem 
 maintains a lower-bound to the solution of the original problem. 
 In the following, based on generalized Benders decomposition \cite{geoffrion1972generalized}, we iteratively select and add new $ \mathbf{\mu}^{k_1} $  and $ \mathbf{\lambda}^{k_2} $, and propose an iterative algorithm which 
 proceeds to reduce the gap between these two bounds, until it becomes small enough and hence, the convergence to the optimal solution is obtained. 

The proposed GBD-based algorithm can be described as follows:

\textbf{Step 1 (Initialization):} The algorithm starts with a feasible initial solution for the complicating variables, i.e., the service placement and request transfer variables. 

\textbf{Step 2 (Solving the Inner Problem):} For the fixed complicating variables, it uses the primal-dual interior-point method to solve the inner problem \eqref{eq: P_in} and find the optimal request scheduling and CPU sizing variables and the corresponding vector of Lagrange multipliers. 
As solution of the inner problem and the latest solution of the relaxed master problem maintain a lower-bound ($\mathsf{LB}$) and an upper-bound ($\mathsf{UB}$) to the solution of the original problem, respectively, one of the following three cases occurs:

\textbf{Step 2A:} If the inner problem is feasible and the gap between its solution and the upper-bound is still large, then the upper-bound will be recalculated and a new optimality cut in the form of \eqref{eq: L super index star} 
is determined and added to the relaxed master problem in \eqref{eq: P relaxed master}. 

\textbf{Step 2B:} If the inner problem is infeasible, then 
a new feasibility cut in the form of 
\eqref{eq: L subscript star} is determined and added to the relaxed master problem \eqref{eq: P relaxed master}. 

\textbf{Step 2C:} If the inner problem is feasible and the gap between its solution and the upper-bound is small enough, then the algorithm stops and the latest solutions of the inner and relaxed master problems are returned as the output of the algorithm.

\textbf{Step 3 (Solving the Relaxed Master Problem):}
Having the new feasibility or optimality cuts added in Step~2 of the algorithm, the relaxed master problem is updated and solved  by any convex optimization algorithm to realize a new solution for the service placement and service transfer variables, as well as a new upper-bound to the solution of the original problem. Then, the algorithm returns to Step 2.

The above steps are done iteratively until the gap between the lower and upper bounds becomes small enough. Finally, it outputs the optimal values of the service placement, service transfer, request scheduling, and CPU sizing variables. Algorithm~\eqref{alg:AO_method} presents the pseudo-code of the proposed algorithm.  

\begin{algorithm}[t]
\caption{ The Proposed GBD-Based Service Placement, Request Routing, and CPU Sizing Algorithm.}
\label{alg:AO_method}
\begin{algorithmic}[1]
\State \textbf{Initialization:} Pick  $\left( \hat{\mathbf{X}} , \hat{\mathbf{Z}} \right) \in 
\mathcal{D}_{X,Z}  $, and set  $\tau^{\text{feas}}=0$, $\tau^{\text{infeas}}=0$, $\textsf{UB}= + \infty$, $\textsf{LB}= - \infty $, and the convergence tolerance parameter $\epsilon \geq 0$.

\State \textbf{Inner problem:} Solve the inner problem $\mathcal{P}_{in}$ in \eqref{eq: P_in} for $\left( \hat{\mathbf{X}} , \hat{\mathbf{Z}} \right)$, 
by the primal-dual interior point method, 
and obtain (if exist) its optimal value  $f\left( \hat{\mathbf{X}} , \hat{\mathbf{Z}} \right)$, its optimal solution $\left( \hat{\mathbf{Y}},\hat{\mathbf{U}} \right)$ and the corresponding Lagrange  multipliers vector $\mathbf{\hat \mu}$. \label{alg:inner}

\If{$ - \infty < f\left( \hat{\mathbf{X}} , \hat{\mathbf{Z}} \right) < \mathsf{UB} - \epsilon $ }


\State $ \tau^{\text{feas}} \leftarrow \tau^{\text{feas}} +1$.
\State $\mathbf{\hat \mu}^{\tau^{\text{feas}}} \leftarrow \mathbf{\hat \mu} $
\State $\textsf{LB} \leftarrow \max \{\textsf{LB},  f\left( \hat{\mathbf{X}} , \hat{\mathbf{Z}} \right)\}$.
\EndIf

\If{$ f\left( \hat{\mathbf{X}} , \hat{\mathbf{Z}} \right) = - \infty $} 

\State $ \tau^{\text{infeas}} \leftarrow \tau^{\text{infeas}} +1$.

\State Solve the 
problem in \eqref{eq: P_feas} for $\left( \hat{\mathbf{X}} , \hat{\mathbf{Z}} \right)$, and obtain its optimal solution $\left( \bar{\mathbf{Y}},\bar{\mathbf{U}} \right)$ and the corresponding Lagrange  multipliers vector $ \hat{\mathbf{\lambda}}^{\tau^{\text{infeas}}}$.

\EndIf

\If{
$ \mathsf{UB} - \epsilon \leq f\left( \hat{\mathbf{X}} , \hat{\mathbf{Z}} \right) < \infty $}
\State Go to Line \ref{alg:output}. 

\EndIf

\State \textbf{Outer problem:} Solve the  current relaxed master problem  defined by \eqref{eq: P relaxed master},   
where functions $ L^\ast $ and $ L_\ast $ functions are defined in \eqref{eq: L super index star} and \eqref{eq: L subscript star}, respectively, 
by any 
convex optimization algorithm, and obtain its optimal solution $ \left( 
\mathsf{UB}, \hat{\mathbf{X}} , \hat{\mathbf{Z}} \right) $. \label{alg:master}





\State Return to Line \ref{alg:inner}.

\State {\textbf{Output:}} $ \left( \mathbf{X}^\ast , \mathbf{Z}^\ast, \mathbf{Y}^\ast , \mathbf{U}^\ast \right) \leftarrow \left( \hat{\mathbf{X}} ,  \hat{\mathbf{Z}} , \hat{\mathbf{Y}}, \hat{\mathbf{U}} \right) $. \label{alg:output}
\end{algorithmic}
\end{algorithm}

A few remarks regarding the proposed algorithm are in place. 

\begin{remark}[\textbf{Complexity of the Inner and Outer Problems}]

    Note that solving the inner and the relaxed master problems in the proposed algorithm are much easier than solving the original problem. 
    This is due to the fact that for fixed values of the complicating variables $\left( \mathbf{X},\mathbf{Z} \right)$, problem $\mathcal{P}_{in}$ is a convex optimization problem, and hence, can be solved in polynomial time by using the conventional convex optimization approaches (e.g., primal-dual interior point methods) \cite{boyd2009convex}. Moreover, it is noted that using primal-dual interior point methods, the optimal Lagrange multipliers corresponding to the inner problem can also be achieved simultaneously, which will be used in the algorithm as well.
    In addition, regarding the relaxed master problem \eqref{eq: P relaxed master}, 
    the optimizations involved in determining the Lagrange functions $L^\ast$ and $L_\ast$ are also convex optimization problems. This is  due to the fact that their objective functions are linearly separable in the binary and real variables, and hence, 
the involved supremums over $\mathcal D_{Y,U}$ can be taken independently of the binary variables $\left( \mathbf{X} , \mathbf{Z} \right)$. 
    As such, the relaxed master problem \eqref{eq: P relaxed master} is no longer an MINLP, but just an ILP that can be solved efficiently via the existing efficient approaches. 
    
\end{remark}

\section{Convergence Analysis}\label{sec:convergence_analysis}

In this section, we present theoretical  convergence results of the proposed algorithm. In particular, we prove the proposed algorithm converges to 
the global optimal solution of the reformulated target problem \eqref{eq: Org Prob with G} in a finite number of iterations. For this purpose, we first state and prove the following lemmas.

\begin{lemma}\label{lem: lem1_alternative}
    The set $\mathcal D_{Y,U}$ is non-empty, convex, bounded and closed. 
\end{lemma}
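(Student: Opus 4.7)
The plan is to verify each of the four properties separately, exploiting the product structure of the domain $\mathcal{D}_{Y,U} = [0,1]^{S \times E \times (E+1)} \times \mathbb{R}_+^{(E+1) \times S}$. Non-emptiness is immediate: the point $(\mathbf{Y},\mathbf{U}) = (\mathbf{0},\mathbf{0})$ is trivially in the Cartesian product, since $0 \in [0,1]$ and $0 \in \mathbb{R}_+$. Convexity follows because both factors are convex (an interval in $\mathbb{R}$ and the non-negative orthant), and a finite Cartesian product of convex sets is convex, so any convex combination $\theta (\mathbf{Y}_1,\mathbf{U}_1) + (1-\theta)(\mathbf{Y}_2,\mathbf{U}_2)$ with $\theta \in [0,1]$ lies componentwise in $[0,1]$ or $\mathbb{R}_+$, respectively. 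Closedness likewise follows from the product structure: $[0,1]$ is a closed interval, $\mathbb{R}_+ = [0,\infty)$ is closed in $\mathbb{R}$, and a finite product of closed sets is closed in the product topology.

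The delicate property is boundedness, since $\mathbb{R}_+$ is itself unbounded. I would handle this by arguing that, without loss of generality, the CPU sizing variables can be restricted a priori to a compact interval $[0,\bar{u}]$ for a sufficiently large but finite $\bar{u}$, and then absorb this bound into the definition of $\mathcal{D}_{Y,U}$. The natural choice is $\bar{u} = \max_{n}\{M_n\}$ for BSs and a comparably large value for the cloud. Two complementary justifications suggest this is WLOG: (i) the hard capacity constraint $\sum_s u_{es} \le M_e$ in \eqref{eq:capacity:mu} already forces $u_{es} \le M_e$ whenever $\mathbf{U}$ is feasible in the inner problem, and (ii) within the objective, the CPU cost $\mathcal{P}_{\text{cpu}}(\mathbf{U}) = \sum_{n,s} p^{\text{cpu}}_n u_{ns}$ is strictly increasing in each $u_{ns}$, while any $u_{ns}$ exceeding $\bar{u}$ strictly tightens no delay or stability constraint (those are already slack once $u_{ns}/\nu_s$ exceeds the maximum possible arrival rate $\sum_e \lambda_{es}$). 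Hence increasing $u_{ns}$ beyond $\bar{u}$ can only decrease the objective, and the truncation does not alter the set of optima.

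The $\mathbf{Y}$-factor is then trivially bounded in $\ell_\infty$-norm by $1$, and with the truncation above the $\mathbf{U}$-factor is bounded by $\bar{u}$, so the product set has finite diameter, completing the argument. The main obstacle, then, is not the topological part (which is a routine product-of-intervals argument) but the careful justification that the nominal unbounded $\mathbb{R}_+^{(E+1) \times S}$ can be replaced by a compact box without changing the optimization problem or the validity of the subsequent GBD analysis (which will require both Slater's condition and compactness of $\mathcal{D}_{Y,U}$ for the subsequent duality and convergence lemmas).
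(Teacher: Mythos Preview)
Your treatment of non-emptiness, convexity, and closedness is routine and matches what the paper intends: the paper's own proof is a single sentence stating that the result ``follows [as] a direct result of the definition of the set $\mathcal D_{Y,U}$'' and is omitted for space. So on those three properties you and the paper agree.

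Where you diverge is on boundedness, and here you are in fact \emph{more} careful than the paper. As literally defined, $\mathcal D_{Y,U} = [0,1]^{S\times E\times(E+1)}\times\mathbb R_+^{(E+1)\times S}$ is \emph{not} bounded, because the $\mathbb R_+$ factor is unbounded. The paper's one-line dismissal glosses over this; your proposal instead argues that one may, without loss of generality, truncate each $u_{ns}$ to a finite box $[0,\bar u]$ using (i) the hard capacity constraint \eqref{eq:capacity:mu} at the BSs and (ii) monotonicity of the CPU cost term for the cloud coordinate. That argument is sound and is precisely the kind of patch the paper would need to make the lemma literally true; it also preserves Slater's condition and the compactness hypothesis needed downstream for the Floudas convergence theorem. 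In short, your approach is not merely different---it fills a gap the paper leaves open.
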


\begin{proof}
 The proof follows a direct result of the definition of the set $\mathcal D_{Y,U}$, and we omit it due to page limit. 
\end{proof}

\begin{lemma}
    The constraint functions in $ \mathbf{G} \left( \mathbf{X} , \mathbf{Z} ,\mathbf{Y},\mathbf{U} \right) $ are concave and continuous on $\mathcal D_{Y,U}$, for any fixed $\left( \mathbf{X} , \mathbf{Z} \right) \in \mathcal D_{X,Z} $.
\end{lemma}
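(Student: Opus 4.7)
The plan is straightforward: I would show that each component of $\mathbf{G}$ is in fact \emph{affine} (hence concave and continuous) in $(\mathbf{Y},\mathbf{U})$ whenever $(\mathbf{X},\mathbf{Z})$ is held fixed, so the claim follows componentwise.

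First I would recall the composition of $\mathbf{G}$. As stated in the problem formulation, the vector $\mathbf{G}(\mathbf{X},\mathbf{Z},\mathbf{Y},\mathbf{U})$ collects precisely the constraints \eqref{eq:sum1}, \eqref{eq:capacity:bw}, \eqref{eq:capacity:mu}, \eqref{eq:stability:node}, \eqref{eq:stability:link}, and \eqref{eq:const:delay2_a}--\eqref{eq:const:delay2_d}, each written in the standard ``$\geq 0$'' form. I would also note the auxiliary identities
\begin{equation*}
\Lambda_{ns}=\sum_{e\in\mathbb{E}}\lambda_{es}y_{sen},\qquad \Lambda^{l}_{en}=\sum_{s\in\mathbb{S}}\beta_{s}\lambda_{es}y_{sen},
\end{equation*}
which are linear in $\mathbf{Y}$. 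Hence any expression built from $\Lambda_{ns}$, $\Lambda^{l}_{en}$, $y_{sen}$ and $u_{ns}$ via affine combinations with coefficients depending only on $(\mathbf{X},\mathbf{Z})$ and on the fixed problem parameters is itself affine in $(\mathbf{Y},\mathbf{U})$.

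Next I would walk through the constraints one by one. The normalization \eqref{eq:sum1}, the uplink/downlink budget \eqref{eq:capacity:bw}, the CPU budget \eqref{eq:capacity:mu}, and the stability conditions \eqref{eq:stability:node}--\eqref{eq:stability:link} are manifestly affine in $(\mathbf{Y},\mathbf{U})$, since each reduces to a difference between a constant (or a linear function of $\mathbf{U}$) and a linear function of $\mathbf{Y}$. The delay constraints \eqref{eq:const:delay2_a}--\eqref{eq:const:delay2_c}, once $(\mathbf{X},\mathbf{Z})$ is frozen, take the form
\begin{equation*}
c_{1}(u_{ns}/\nu_{s}-\Lambda_{ns})-z_{sen}\;\geq\;0,\qquad c_{2}(B^{l}_{en}-\Lambda^{l}_{en})-z_{sen}\;\geq\;0,
\end{equation*}
with nonnegative constants $c_{1}\in\{D^{\max}_{s},(1-\alpha)D^{\max}_{s}\}$ and $c_{2}=\max\{\alpha D^{\max}_{s}-d^{\text{prp}}_{en},0\}$; substituting the linear expressions for $\Lambda_{ns}$ and $\Lambda^{l}_{en}$ shows these are affine in $(\mathbf{Y},\mathbf{U})$. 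Finally, \eqref{eq:const:delay2_d} with $z_{sen}$ frozen reads $z_{sen}-y_{sen}\geq 0$, which is trivially affine in $\mathbf{Y}$.

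Since every component of $\mathbf{G}(\mathbf{X},\mathbf{Z},\cdot,\cdot)$ is affine in $(\mathbf{Y},\mathbf{U})$, each component is simultaneously concave and convex and continuous on $\mathcal{D}_{Y,U}$, yielding the claim. There is no real obstacle here; the only point requiring any care is the observation that $\max\{\alpha D^{\max}_{s}-d^{\text{prp}}_{en},0\}$ in \eqref{eq:const:delay2_c} is a constant (depending only on problem data, not on the decision variables), so the nonsmooth $\max$ does not break affineness in $(\mathbf{Y},\mathbf{U})$.
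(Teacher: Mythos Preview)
Your proposal is correct and follows essentially the same approach as the paper: the paper's own proof simply asserts that all constraint functions in $\mathbf{G}$ are linear in $(\mathbf{Y},\mathbf{U})$ for fixed $(\mathbf{X},\mathbf{Z})$, and concludes concavity and continuity directly from that. Your version is more explicit---walking through each constraint and noting that the $\max\{\alpha D^{\max}_s - d^{\text{prp}}_{en},0\}$ factor is a data-dependent constant---but the underlying argument is identical.
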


\begin{proof}
    For any fixed value of $\left( \mathbf{X} , \mathbf{Z} \right) \in \mathcal D_{X,Z} $, all the constraint functions in $ \mathbf{G} \left( \mathbf{X} , \mathbf{Z} ,\mathbf{Y},\mathbf{U} \right) $ are linear over $\left( \mathbf{Y},\mathbf{U} \right)$ and hence, also concave. Moreover, due to \ref{lem: lem1_alternative} and the linearity of the constraint functions in $ \mathbf{G} \left( \mathbf{X} , \mathbf{Z} ,\mathbf{Y},\mathbf{U} \right) $, these functions are also continuous on $D_{Y,U}$, for each fixed $ \left( \mathbf{X} , \mathbf{Z} \right) \in \mathcal{D}_{X,Z} $. 
\end{proof}

\begin{lemma}\label{lem: strong duality}
Strong duality holds for the  inner problem $\mathcal{P}_{in}$. 

\end{lemma}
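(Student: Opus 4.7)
The plan is to establish strong duality by verifying that $\mathcal{P}_{in}$ is a convex program whose inequality constraints are all affine in the continuous variables, and then invoking the refined form of Slater's condition (which, for affine constraints, reduces strong duality to feasibility).

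First I would verify convexity of the inner problem for any fixed $(\hat{\mathbf{X}}, \hat{\mathbf{Z}}) \in \mathcal{D}_{X,Z}$. The domain $\mathcal{D}_{Y,U} = [0,1]^{S \times E \times (E+1)} \times \mathbb{R}_+^{(E+1)\times S}$ is the Cartesian product of a box and a nonnegative orthant, hence a non-empty convex (and closed) set. The objective $\mathcal{R}_{\text{srv}}(\mathbf{Y}) - \mathcal{P}_{\text{cpu}}(\mathbf{U}) - \mathcal{P}_{\text{trn}}(\mathbf{Y})$ is concave in $(\mathbf{Y},\mathbf{U})$: $\mathcal{R}_{\text{srv}}$ is a sum of concave functions of the linear quantities $\Lambda_{ns} = \sum_e \lambda_{es} y_{sen}$ by the modelling assumption on $R^{\textit{srv}}_{ns}$, and both $\mathcal{P}_{\text{cpu}}$ and $\mathcal{P}_{\text{trn}}$ are linear in $\mathbf{U}$ and $\mathbf{Y}$, respectively.

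Second, I would check that every scalar component of $\mathbf{G}(\hat{\mathbf{X}}, \hat{\mathbf{Z}}, \mathbf{Y}, \mathbf{U})$ is affine in $(\mathbf{Y},\mathbf{U})$. Constraints \eqref{eq:sum1}, \eqref{eq:capacity:bw}, \eqref{eq:capacity:mu}, \eqref{eq:stability:node}, \eqref{eq:stability:link} and \eqref{eq:const:delay2_d} are manifestly linear in $(\mathbf{Y},\mathbf{U})$. The less obvious ones are the delay constraints \eqref{eq:const:delay2_a}--\eqref{eq:const:delay2_c}: after the reformulation carried out in Section~\ref{sec:solution:reformulate} the non-convex indicator and reciprocal terms have been eliminated, leaving the right-hand sides in the form $D^{\max}_s (u_{ns}/\nu_s - \Lambda_{ns})$, $(1-\alpha)D^{\max}_s (u_{ns}/\nu_s - \Lambda_{ns})$ and $[\alpha D^{\max}_s - d^{\text{prp}}_{en}]^+(B^l_{en} - \Lambda^l_{en})$, each of which is affine in $(\mathbf{Y},\mathbf{U})$ (recall that $\Lambda_{ns}$ and $\Lambda^l_{en}$ are linear in $\mathbf{Y}$, and the $z_{sen}$ terms on the left are constants once $\hat{\mathbf{Z}}$ is fixed). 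Thus $\mathcal{P}_{in}$ is a concave maximization (equivalently, convex minimization) problem with only affine inequality constraints over a convex domain.

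Third, I would invoke the affine-constraint refinement of Slater's condition (\emph{cf.}~\cite{boyd2004convex}): for a convex problem whose inequality constraints are all affine, strong duality holds whenever the feasible set is non-empty. The lemma is invoked exclusively in Step~2A of Algorithm~\ref{alg:AO_method}, i.e.\ at iterations in which the inner problem has been solved and returned a finite optimum, so feasibility is guaranteed by hypothesis. Consequently, a zero duality gap holds and the Lagrange multiplier vector $\hat{\boldsymbol{\mu}}$ returned by the primal--dual interior-point solver is optimal for the dual of $\mathcal{P}_{in}$, justifying its use in the optimality cut \eqref{eq: L super index star}.

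The only step that is not entirely routine is the affineness verification of \eqref{eq:const:delay2_a}--\eqref{eq:const:delay2_c}, because these constraints at first glance involve the ratios that produced the original non-convex delay terms; the key point to emphasize in the write-up is that the reformulation already moved the offending denominators to the other side, so no further work is needed beyond citing the refined Slater condition.
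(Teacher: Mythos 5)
Your proposal is correct and follows essentially the same route as the paper: verify that $\mathcal{P}_{in}$ is a convex program with only affine constraints in $(\mathbf{Y},\mathbf{U})$ over the non-empty convex domain $\mathcal{D}_{Y,U}$, then invoke the (refined) Slater condition to conclude zero duality gap. Your write-up is simply a more detailed, constraint-by-constraint version of the paper's brief argument, so no further comparison is needed.
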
\label{lem: slater}

\begin{proof}
It can be easily verified that the inner problem $\mathcal{P}_{in}$ in \eqref{eq: P_in} is convex. 
Furthermore, since for any feasible 
$ \left( \mathbf{X} , \mathbf{Z} \right)  $, the inner problem includes only linear constraints and its domain is non-empty, it  satisfies Slater's condition too.  Consequently, strong duality holds between $\mathcal P_{in}$ and its dual. 
\end{proof}

\begin{lemma}\label{lem: opt multplr vec}

For any fixed $\left( \mathbf{X} , \mathbf{Z} \right) \in \mathcal D_{X,Z} \cap V $, the solution of the inner problem $\mathcal{P}_{in}$ 
  is either unbounded or bounded 
  with the inner problem possessing an optimal multiplier vector.
\end{lemma}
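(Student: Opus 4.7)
The plan is to dichotomize on the optimal value of $\mathcal{P}_{in}$ at the fixed $(\mathbf{X}, \mathbf{Z}) \in \mathcal{D}_{X,Z} \cap V$, and then combine the convexity of the inner problem with the strong duality already established in Lemma~\ref{lem: strong duality}. First I would dispatch the easy case $f(\mathbf{X}, \mathbf{Z}) = +\infty$: here the inner problem is unbounded and the first alternative of the lemma holds immediately, with nothing further to verify.

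In the remaining case, $f(\mathbf{X}, \mathbf{Z})$ is finite. Since $(\mathbf{X}, \mathbf{Z}) \in V$, the inner problem is feasible by definition of $V$; by the preceding lemmas it is convex and all of its constraints are linear in $(\mathbf{Y}, \mathbf{U})$ on the non-empty domain $\mathcal{D}_{Y,U}$, so Slater's condition is automatically satisfied. Invoking Lemma~\ref{lem: strong duality} then gives strong duality, so the dual optimal value equals $f(\mathbf{X}, \mathbf{Z})$ and is likewise finite. What remains is to show that the dual supremum is actually \emph{attained} by some $\hat{\mu} \geq \mathbf{0}$; this $\hat{\mu}$ is then the required optimal multiplier vector.

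The main obstacle is precisely this attainment step, since strong duality alone only asserts equality of primal and dual optimal values, not that a dual optimizer exists (one can construct convex programs with zero duality gap whose dual supremum is not reached). To close the gap, I would invoke the standard convex-analysis result that for a convex program with Slater's condition and finite optimal value, the set of Karush--Kuhn--Tucker multipliers is non-empty, convex, and bounded (see, e.g., \cite{boyd2004convex}, Section 5.5.3). Applied to $\mathcal{P}_{in}$, this yields an optimal multiplier vector in the bounded case, completing the proof and ruling out any intermediate third possibility.
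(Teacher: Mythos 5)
Your proof is correct and follows essentially the same route as the paper: dichotomize on whether $f(\mathbf{X},\mathbf{Z})$ is $+\infty$ or finite, and in the finite case use convexity and Slater's condition (Lemma~\ref{lem: strong duality}) to conclude that an optimal multiplier vector exists. Your explicit treatment of dual attainment --- noting that a zero duality gap alone does not guarantee a dual optimizer and invoking the non-emptiness of the multiplier set under Slater's condition --- fills in a step the paper's proof leaves implicit, so your version is, if anything, slightly more careful.
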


\begin{proof}
First, note that according to the definition of $V$, for any $\left( \mathbf{X} , \mathbf{Z} \right) \in \mathcal D_{X,Z} \cap V $, the problem $\mathcal{P}_{in}$ never gets infeasible. Therefore, either $f \left( \mathbf{X} , \mathbf{Z}  \right) = + \infty $ or $f \left( \mathbf{X} , \mathbf{Z}  \right) $ is finite. In the latter case, since according to Lemma \ref{lem: strong duality}, strong duality holds between  problem $\mathcal{P}_{in} $ and its dual,  
this problem possesses an optimal 
multiplier vector  which is the Lagrangian multiplier vector   associated with the constraints in $\mathcal P_{in}$ (which is the solution to the dual problem) and is obtainable 
via the primal-dual interior point methods. 



\end{proof}

\begin{theorem}[Finite Convergence to the Global Optimal Solution]
 The proposed 
 method in Algorithm  \ref{alg:AO_method}  terminates in a finite number of steps for any given $\epsilon \geq 0$, therefore,  converging  to the global optimal solution of the reformulated target problem in \eqref{eq: Org Prob with G}. 
\end{theorem}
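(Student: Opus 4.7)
The plan is to invoke the finite convergence framework of Geoffrion's generalized Benders decomposition \cite{geoffrion1972generalized}, verifying that the hypotheses required by that framework have already been established in the preceding lemmas, and then exploiting the finiteness of the binary domain $\mathcal{D}_{X,Z}$ to obtain termination in a bounded number of iterations. Specifically, I would first observe that Lemmas that precede the theorem collectively supply exactly the structural properties Geoffrion's convergence theorem requires: namely (i) $\mathcal{D}_{Y,U}$ is a non-empty, convex, closed and bounded set, (ii) the constraint vector $\mathbf{G}(\mathbf{X},\mathbf{Z},\mathbf{Y},\mathbf{U})$ is concave and continuous in $(\mathbf{Y},\mathbf{U})$ on $\mathcal{D}_{Y,U}$ for every fixed $(\mathbf{X},\mathbf{Z})\in\mathcal{D}_{X,Z}$, (iii) strong duality holds for the inner problem $\mathcal{P}_{in}$ (Lemma on Slater's condition), and (iv) whenever the inner problem is feasible it either has a finite optimum with an associated optimal multiplier vector $\hat{\bm{\mu}}$ or is unbounded (Lemma \ref{lem: opt multplr vec}).

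Next, I would verify the so-called \emph{Property P} that underlies the explicit representation of $L^{\ast}$ and $L_{\ast}$ used in the algorithm: the supremum defining each Lagrange function can be carried out over $(\mathbf{Y},\mathbf{U})\in\mathcal{D}_{Y,U}$ independently of $(\mathbf{X},\mathbf{Z})$. This is immediate from the fact that, after the reformulation performed in Section~\ref{sec:solution:reformulate}, both the objective $\mathcal{U}_{\text{total}}$ and the constraint vector $\mathbf{G}$ are linearly separable in the complicating variables $(\mathbf{X},\mathbf{Z})$ and the simple variables $(\mathbf{Y},\mathbf{U})$; in particular the coupling appears only affinely, as is visible in \eqref{eq:const:delay2_a}--\eqref{eq:const:delay2_d}. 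With Property P in force, the closed-form expressions \eqref{eq: L super index star} and \eqref{eq: L subscript star} are legitimate characterizations of the cuts added to $\mathcal{P}_{out}^{\text{relaxed}}$, and each cut is globally valid for the master.

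The crux of the finite termination argument will then be the following. The complicating-variable domain $\mathcal{D}_{X,Z} = \{0,1\}^{(E+1)\times S}\times\{0,1\}^{S\times E\times(E+1)}$ is finite, with cardinality at most $2^{(E+1)S + SE(E+1)}$. I would show that no point $(\hat{\mathbf{X}},\hat{\mathbf{Z}})$ can be revisited by the relaxed master after a cut has been generated at it: if $\mathcal{P}_{in}(\hat{\mathbf{X}},\hat{\mathbf{Z}})$ is infeasible, then by construction the associated feasibility cut $L_{\ast}(\hat{\mathbf{X}},\hat{\mathbf{Z}};\hat{\bm{\lambda}})<0$ strictly excludes $(\hat{\mathbf{X}},\hat{\mathbf{Z}})$ from the next relaxed master; and if $\mathcal{P}_{in}(\hat{\mathbf{X}},\hat{\mathbf{Z}})$ is feasible with optimum $f(\hat{\mathbf{X}},\hat{\mathbf{Z}})$, then by strong duality and the definition of $L^{\ast}$ we have $L^{\ast}(\hat{\mathbf{X}},\hat{\mathbf{Z}};\hat{\bm{\mu}}) = f(\hat{\mathbf{X}},\hat{\mathbf{Z}})\le\mathsf{LB}$, so the optimality cut forces any subsequent relaxed master solution reusing $(\hat{\mathbf{X}},\hat{\mathbf{Z}})$ to have $\mathsf{UB}\le\mathsf{LB}+\epsilon$, triggering the stopping test at Step~2C. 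Consequently the algorithm visits each element of $\mathcal{D}_{X,Z}$ at most once before either stopping or producing a new cut, so it terminates after at most $|\mathcal{D}_{X,Z}|$ iterations.

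Finally, for the optimality part, I would use the standard monotone-bounds argument: the inner problem always supplies a valid lower bound $\mathsf{LB}$ on the optimum of $\mathcal{P}_{org}^{\text{reformulated}}$, while the relaxed master (which is a relaxation of the projected problem over $(\mathbf{X},\mathbf{Z})$) always supplies a valid upper bound $\mathsf{UB}$; both sequences are monotone, and the stopping rule $\mathsf{UB}-\mathsf{LB}\le\epsilon$ guarantees $\epsilon$-optimality at termination, with exact optimality when $\epsilon=0$. The main technical obstacle I anticipate is rigorously justifying that each optimality/feasibility cut is globally valid and strictly separates the current candidate, in particular showing that $L_{\ast}$ constructed from the solution of the auxiliary problem in \eqref{eq: P_feas} certifies infeasibility in the sense required by \eqref{eq: feas lambda set}; this reduces to checking that the min--max exchange carried out in \eqref{eq: P_feas} is legitimate, which follows from convexity of $\mathcal{D}_{Y,U}$ and of $\mathbf{G}$ in $(\mathbf{Y},\mathbf{U})$ via a standard minimax theorem, a fact essentially already used in deriving \eqref{eq: P_feas}.
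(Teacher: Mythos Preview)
Your proposal is correct and follows essentially the same approach as the paper: both invoke the finite-convergence theory of generalized Benders decomposition after checking, via the preceding lemmas, that the required structural hypotheses (non-empty convex compact $\mathcal{D}_{Y,U}$, concavity and continuity of $\mathbf{G}$, strong duality, existence of optimal multipliers) hold. The paper simply cites \cite[Theorem~6.3.4]{floudas1995nonlinear} for the conclusion, whereas you spell out the underlying mechanism---finiteness of the binary domain $\mathcal{D}_{X,Z}$ together with the non-revisiting property of feasibility and optimality cuts---which is precisely the content of that cited theorem; your added verification of Property~P and the validity of the min--max exchange in \eqref{eq: P_feas} are also consistent with how the paper uses these facts earlier in Section~\ref{sec:solution:GBD}.
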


\begin{proof}

According to Lemmas \ref{lem: lem1_alternative}--\ref{lem: opt multplr vec} and using \cite{floudas1995nonlinear}[Theorem 6.3.4], 
the proposed GBD-based algorithm terminates in a finite number of iterations for any given $\epsilon \geq 0$. Consequently, when $ \epsilon $ is set to $ 0 $, then the upper-bound and the lower-bound obtained from the algorithm for the optimal value of Problem \eqref{eq: Org Prob with G} equal to each other and hence, the algorithm converges to the optimal solution for the target problem. 
\end{proof}

\section{Numerical Performance Evaluation}\label{sec:evaluation}
In this section, we perform numerical experiments to show the performance of the proposed 
GBD-based approach in terms of the SP's profit, admission rate, end-to-end-delay, cloud load, and convergence rate, 
and compare it to several baselines. The proposed solution and baseline schemes are implemented using CVXPY \cite{diamond2016cvxpy}. Moreover, in the proposed solution, the inner convex problem is solved using MOSEK \cite{mosek} and the master ILP problem is solved using Gurobi \cite{gurobi}. 

\textbf{Baselines.} For comparison, we also implement the following baselines in different evaluation scenarios. For fair comparison, in all benchmarks, the ILP problem is solved using Gurobi and the inner problem is solved by MOSEK. All experiments are run on a server with $ 630 $ GB RAM and $ 16 $ Cores with Ubuntu OS. 
\begin{enumerate}
\item {\underline {Non-Cooperative MEC (NC)}}:  
All service requests at the BSs are served locally, i.e., $z_{sen}=0,\, \forall s, e, n \neq e$.  Under this assumption, the inner and master problems can 
be solved independently for each $\mathrm{BS}_e$.
\item {\underline{Alternating Optimization (AO)}}:  
At each iteration, the optimization problem in \eqref{eq: Org Prob with G} 
is solved with respect to the set of real and integer variables, alternatively, assuming the other set of variables are fixed. 
\item {\underline{CSPR \cite{somesula2023cooperative}}}: This 
baseline first relaxes all the integer variables, and then employs 
an approximation scheme based on relaxation and rounding techniques to solve the cooperative MEC optimization problem. 
More details can be found in \cite{somesula2023cooperative} and are omitted here due to space limitations. 
\item {\underline{SPR3 \cite{poularakis2020service}}}: This baseline leverages a randomized rounding technique to propose an approximation algorithm that 
provably achieves approximation guarantees while violating the resource constraints in a bounded way. More details can be found in \cite{poularakis2020service}. 
\end{enumerate}

\begin{table}[!t]
\centering
  \caption{Services.}
  \setlength{\tabcolsep}{0.9pt} 
\begin{tabular}{ l c c c c c 
 }
  \hline
 & Storage \hspace{3 pt}  & Computation \hspace{3 pt}  & Input/output Size
 \\ 
 Description &$\sigma_s$&$\nu_s$&$\beta_s$
 \\ 
 & (GB) &(M CPU cyles)& (Mb) 
  \\\hline
 Face Recognition &$[2,10]$&$[0.375,3]$&$[1,8]$
 \\
 GZip (compression) &$0.02$&$[0.04,0.32]$&$[1,6]$
 \\
 Augmented Reality&$[2,20]$&$ [0.375,3] $&$[1,6]$
 \\
  Video Streaming &$[1,10]$&$0.0001$&$[1,25]$
  \\
\end{tabular}
 \label{table:services}
\end{table}

\textbf{Network and SP Setup.}
We consider $E=10$ BSs located uniformly at random in a $ 500 \times 500 ~\text{m}^2 $  region. 
The storage, computation and communication capacities of each $ \text{BS}_e $ is set to $C_e=100 $~GBs, $M_e=20$~GHz, and $B_e=100$~Mbps, respectively. 
We assume that a communication link 
with propagation latency of $d^{\text{prp}}_{ee'} = 0.5\mu s, \forall e, e'$ 
exists between each two neighbouring BSs that are less than $150$~m apart. Moreover, the propagation delay between BSs and the cloud is set to $ 50 $~ms \cite{tran2018cooperative}. 


We consider $S=3000$ heterogeneous  services uniformly distributed from four real latency-sensitive service categories, namely Video streaming (VS), Face recognition (FR), Gzip (compression) and Augmented reality (AR), with specifications that can be found in Table~\ref{table:services}.  
The target delay of all these services are also set to  $D^{\text{max}}_s = 400$~ms. The values of this table are inline with the real service specifications (e.g., see \cite{ SmartFace, HPReality}). Moreover, 
we assume $\lambda_{e}=10$ rqts/s. Moreover, total requests for each service category at each $\mathrm{BS}_e$ is distributed according to Zipf with $\alpha=0.8$, i.e, $\lambda_{es} \propto f_{es}^{-\alpha}$ where $f_{es}$ is the rank of service $s$ among all services of its correspondent category at $\mathrm{BS}_e$. 

\textbf{Leasing costs of storage, computation, and bandwidth.} An estimate of bandwidth, storage and computation costs derives from the rates charged by cloud providers like Google \cite{googleCloud}. GB download prices convert to a rate of around $\$0.2$ and $\$0.5$ per Mbps of hour traffic for inter BS and transit traffic to the cloud.
Likewise, 
 a rate of around $0.2$ per GB per month and $0.4$ per GHz/h is the current offer for standard access storage (HDD) and general-purpose machine types, respectively. We assume that these costs at each $\mathrm{BS}_e$ is scaled by $\omega_e>1$, which is drawn uniformly from $ \left[ 4, 7 \right] $.

\textbf{Revenue.} We consider weighted linear revenue such that $R^{\text{srv}}_{ns}(\Lambda_{ns})=w_{es} \Lambda_{ns}$, where $w_{es}$ denotes the the  hourly revenue obtained by serving one request of  service $s$ per second at server $n$. Different weights allow prioritizing the admission of requests for one service or one region over the others depending on the reliability requirements. In our evaluations, the weights are chosen service-dependent and 
proportional to the amount of input/output size of each service, such that $w_{ns}$ is equal to $3 \beta_s$. 

\subsection{Performance Comparison of the Proposed Solution and the Baselines}

In the rest of this section, we evaluate the performance of the proposed method and compare it to the illustrated baselines, in terms of various performance metrics.  

First, Fig. \ref{fig:Util} illustrates the SP's profit versus 
the number of BSs. It can be verified from this figure that the proposed method significantly outperforms all the baselines in maximizing the net profit of the SP in performing cooperative MEC in the network.  In addition, when the number of the BSs increases, 
the  SP's profit in our method rapidly increases (almost linearly), owing to the increased cooperation provided by the proposed scheme among the BSs.

\begin{figure}[t]
  \centering
    \includegraphics[width=0.53\textwidth]{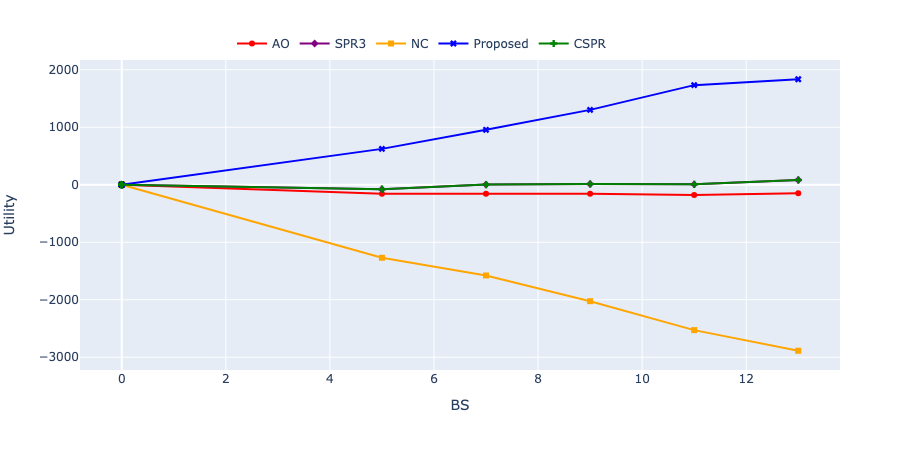} 
      \caption{Comparison of the profit of the SP in the proposed scheme and the baselines.}
      \label{fig:Util}
\end{figure}

Next, Fig. \ref{fig:time} depicts the running time of the proposed method and different baselines. As can be seen from this figure, the proposed method demonstrates a comparable running time to those of the fastest baselines, while it reaches to the highest objective values within almost the same time. Therefore, while the proposed method is proved to converge to the optimal solution of the reformulated target problem, its convergence time is as almost low as the fastest heuristics. This result aligns with our theoretical analysis that the proposed method converges to the optimal point in finite number of iterations (of the master problem). Moreover, it also demonstrates the low computational complexity of the proposed scheme.

\begin{figure}[t]
  \centering
    \includegraphics[width=0.53\textwidth]{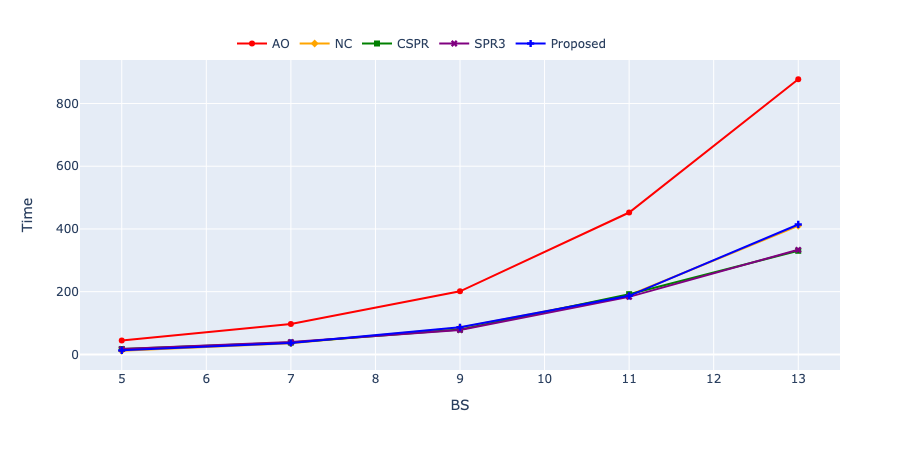} 
      \caption{Comparison of the running time of the proposed algorithm and the baselines.}
      \label{fig:time}
\end{figure}

Fig. \ref{fig:CHR} compares the cache hit performance of the proposed method to those of the baselines, for varying number of BSs. As can be verified from this figure, the proposed method can reach  significantly more cache hit ratios compared to the baselines. Therefore, the proposed method can serve  significantly more services compared to the state-of-the-art baselines. 

\begin{figure}[t]
  \centering
    \includegraphics[width=0.53\textwidth]{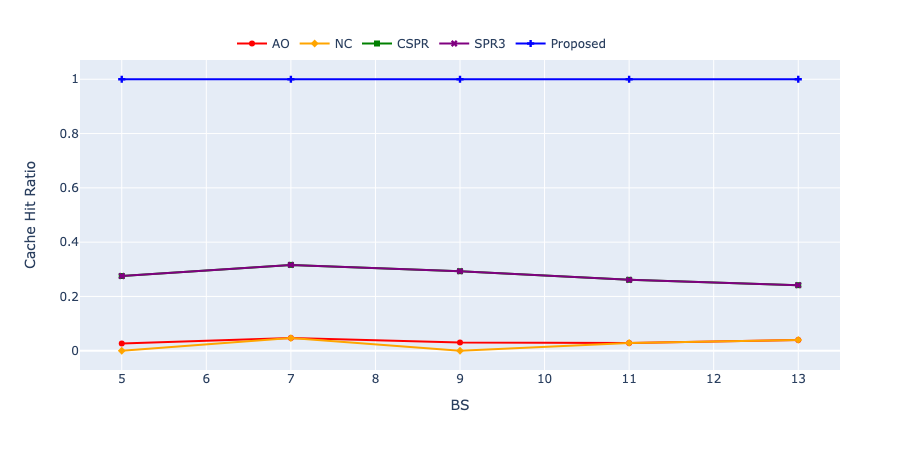} 
      \caption{Comparison of the cache hit ratio in the proposed scheme and the baselines.}
      \label{fig:CHR}
\end{figure}

Finally, Fig. \ref{fig:Del} shows the average end-to-end delay of the served services under the proposed scheme and the baselines. As can be seen, the proposed method maintains the end-to-end delay of the served services  under the considered target delay (which was $D^{\text{max}}_s = 400$~ms). Moreover, compared to the baselines, its average end-to-end delay is almost minimum. 

\begin{figure}[t]
  \centering
    \includegraphics[width=0.53\textwidth]{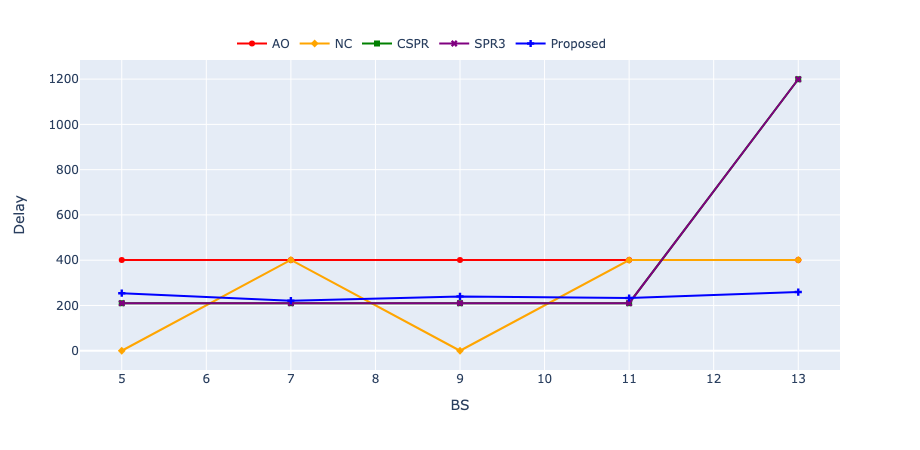} 
      \caption{Comparison of the average end-to-end delay in the proposed scheme and the baselines.}
      \label{fig:Del}
\end{figure}

\section{Conclusions}\label{sec:conclusion}

In this paper, 
we studied the problem of maximizing the SP's total profit subject to the computation, storage, and communication constraints of each edge node and end-to-end delay requirements of the services, which is formulated as a challenging mixed-integer non-convex optimization problem. To tackle the challenges in solving the problem, we first introduce a design trade-off parameter for different delay requirements of each service, which maintains degrees-of-freedom in prioritizing them, and transform the original  optimization problem by the new delay constraints. Then, by exploiting a hidden convexity, we reformulate these delay constraints. 
To tackle the challenges caused by the complicating variables in solving the reformulated problem, first 
using primal decomposition, we decomposed the mixed integer problem into an equivalent form of inner and master sub-problems, and then employed a cutting-plane approach for building up adequate representations of the extremal value of the inner problem as a function of the complicating variables and the set of values of the complicating  variables for which the inner problem is feasible. Finally, based on generalized Benders decomposition, we proposed 
an iterative solution strategy and proved its convergence to the optimal solution within a limited number of iterations of the master problem. Extensive simulation results demonstrate that the proposed scheme significantly outperforms the existing mechanisms in terms of various metrics, including the SP's profit, cache hit ratio, running time, and end-to-end delay.


\bibliographystyle{IEEEtranN}
\bibliography{IEEEabrv,refs}

\begin{thebibliography}{39}
\providecommand{\natexlab}[1]{#1}
\providecommand{\url}[1]{#1}
\csname url@samestyle\endcsname
\providecommand{\newblock}{\relax}
\providecommand{\bibinfo}[2]{#2}
\providecommand{\BIBentrySTDinterwordspacing}{\spaceskip=0pt\relax}
\providecommand{\BIBentryALTinterwordstretchfactor}{4}
\providecommand{\BIBentryALTinterwordspacing}{\spaceskip=\fontdimen2\font plus
\BIBentryALTinterwordstretchfactor\fontdimen3\font minus
  \fontdimen4\font\relax}
\providecommand{\BIBforeignlanguage}[2]{{%
\expandafter\ifx\csname l@#1\endcsname\relax
\typeout{** WARNING: IEEEtranN.bst: No hyphenation pattern has been}%
\typeout{** loaded for the language `#1'. Using the pattern for}%
\typeout{** the default language instead.}%
\else
\language=\csname l@#1\endcsname
\fi
#2}}
\providecommand{\BIBdecl}{\relax}
\BIBdecl

\bibitem[Mach and Becvar(2017)]{mach2017mobile}
P.~Mach and Z.~Becvar, ``Mobile edge computing: {A} survey on architecture and
  computation offloading,'' \emph{IEEE communications surveys \& tutorials},
  vol.~19, no.~3, pp. 1628--1656, 2017.

\bibitem[hu2(2015)]{hu2015mobile}
``Mobile edge computing—{A} key technology towards {5G},'' \emph{ETSI white
  paper}, no.~11, pp. 1--16, 2015.

\bibitem[Somesula et~al.(2023)Somesula, Mothku, and
  Annadanam]{somesula2023cooperative}
M.~K. Somesula, S.~K. Mothku, and S.~C. Annadanam, ``Cooperative service
  placement and request routing in mobile edge networks for latency-sensitive
  applications,'' \emph{IEEE Systems Journal}, 2023.

\bibitem[Xu et~al.(2022)Xu, Zhou, Chau, Liang, Dai, Chen, Xu, Xia, and
  Zhou]{xu2022near}
Z.~Xu, L.~Zhou \emph{et~al.}, ``Near-optimal and collaborative service caching
  in mobile edge clouds,'' \emph{IEEE Transactions on Mobile Computing}, 2022.

\bibitem[Poularakis et~al.(2020)Poularakis, Llorca, Tulino, Taylor, and
  Tassiulas]{poularakis2020service}
K.~Poularakis, J.~Llorca \emph{et~al.}, ``Service placement and request routing
  in {MEC} networks with storage, computation, and communication constraints,''
  \emph{IEEE/ACM Transactions on Networking}, vol.~28, no.~3, pp. 1047--1060,
  2020.

\bibitem[Xu et~al.(2018)Xu, Chen, and Zhou]{xu2018joint}
J.~Xu, L.~Chen, and P.~Zhou, ``Joint service caching and task offloading for
  mobile edge computing in dense networks,'' in \emph{IEEE INFOCOM}.\hskip 1em
  plus 0.5em minus 0.4em\relax IEEE, 2018, pp. 207--215.

\bibitem[Nguyen et~al.(2021)Nguyen, Nguyen, Trieu, and Bhargava]{nguyen2021two}
D.~T. Nguyen, H.~T. Nguyen \emph{et~al.}, ``Two-stage robust edge service
  placement and sizing under demand uncertainty,'' \emph{IEEE Internet of
  Things Journal}, 2021.

\bibitem[Wei et~al.(2021)Wei, Rahman, Cheng, and Wang]{wei2021joint}
X.~Wei, A.~M. Rahman \emph{et~al.}, ``Joint optimization across timescales:
  Resource placement and task dispatching in edge clouds,'' \emph{IEEE
  Transactions on Cloud Computing}, 2021.

\bibitem[Farhadi et~al.(2021)Farhadi, Mehmeti, He, La~Porta, Khamfroush, Wang,
  Chan, and Poularakis]{farhadi2021service}
V.~Farhadi, F.~Mehmeti \emph{et~al.}, ``Service placement and request
  scheduling for data-intensive applications in edge clouds,'' \emph{IEEE/ACM
  Transactions on Networking}, vol.~29, no.~2, pp. 779--792, 2021.

\bibitem[Wen et~al.(2020)Wen, Cui, Quek, Zheng, and Jin]{wen2020joint}
W.~Wen, Y.~Cui \emph{et~al.}, ``Joint optimal software caching, computation
  offloading and communications resource allocation for mobile edge
  computing,'' \emph{IEEE Transactions on Vehicular Technology}, vol.~69,
  no.~7, pp. 7879--7894, 2020.

\bibitem[Chen et~al.(2021)Chen, Brinton, and Aggarwal]{chen2021latency}
C.-L. Chen, C.~G. Brinton, and V.~Aggarwal, ``Latency minimization for mobile
  edge computing networks,'' \emph{IEEE Transactions on Mobile Computing},
  2021.

\bibitem[Ma et~al.(2020)Ma, Zhou, Zhang, and Wang]{ma2020cooperative}
X.~Ma, A.~Zhou \emph{et~al.}, ``Cooperative service caching and workload
  scheduling in mobile edge computing,'' in \emph{IEEE INFOCOM}.\hskip 1em plus
  0.5em minus 0.4em\relax IEEE, 2020, pp. 2076--2085.

\bibitem[Borst et~al.(2010)Borst, Gupta, and Walid]{borst2010distributed}
S.~Borst, V.~Gupta, and A.~Walid, ``Distributed caching algorithms for content
  distribution networks,'' in \emph{IEEE INFOCOM}.\hskip 1em plus 0.5em minus
  0.4em\relax IEEE, 2010, pp. 1--9.

\bibitem[Ahmadi et~al.(2020)Ahmadi, Roberts, Leonardi, and
  Movaghar]{ahmadi2020cache}
M.~Ahmadi, J.~Roberts \emph{et~al.}, ``Cache subsidies for an optimal memory
  for bandwidth tradeoff in the access network,'' \emph{IEEE Journal on
  Selected Areas in Communications}, vol.~38, no.~4, pp. 736--749, 2020.

\bibitem[Ndikumana et~al.(2019)Ndikumana, Tran, Ho, Han, Saad, Niyato, and
  Hong]{ndikumana2019joint}
A.~Ndikumana, N.~H. Tran \emph{et~al.}, ``Joint communication, computation,
  caching, and control in big data multi-access edge computing,'' \emph{IEEE
  Transactions on Mobile Computing}, vol.~19, no.~6, pp. 1359--1374, 2019.

\bibitem[Yang et~al.(2019)Yang, Li, Shen, Chen, Fu, and Wang]{yang2019cloudlet}
S.~Yang, F.~Li \emph{et~al.}, ``Cloudlet placement and task allocation in
  mobile edge computing,'' \emph{IEEE Internet of Things Journal}, vol.~6,
  no.~3, pp. 5853--5863, 2019.

\bibitem[Eshraghi and Liang(2019)]{eshraghi2019joint}
N.~Eshraghi and B.~Liang, ``Joint offloading decision and resource allocation
  with uncertain task computing requirement,'' in \emph{IEEE INFOCOM}.\hskip
  1em plus 0.5em minus 0.4em\relax IEEE, 2019, pp. 1414--1422.

\bibitem[Zhang et~al.(2019)Zhang, Du, Ye, Liu, and Yuan]{zhang2019dmra}
C.~Zhang, H.~Du \emph{et~al.}, ``{DMRA:} a decentralized resource allocation
  scheme for multi-sp mobile edge computing,'' in \emph{2019 IEEE 39th
  International Conference on Distributed Computing Systems (ICDCS)}.\hskip 1em
  plus 0.5em minus 0.4em\relax IEEE, 2019, pp. 390--398.

\bibitem[Pasteris et~al.(2019)Pasteris, Wang, Herbster, and
  He]{pasteris2019service}
S.~Pasteris, S.~Wang \emph{et~al.}, ``Service placement with provable
  guarantees in heterogeneous edge computing systems,'' in \emph{IEEE INFOCOM
  2019-IEEE Conference on Computer Communications}.\hskip 1em plus 0.5em minus
  0.4em\relax IEEE, 2019, pp. 514--522.

\bibitem[Lynch(2007)]{lynch2007introduction}
S.~M. Lynch, \emph{Introduction to applied Bayesian statistics and estimation
  for social scientists}.\hskip 1em plus 0.5em minus 0.4em\relax Springer
  Science \& Business Media, 2007.

\bibitem[Boyd et~al.(2011)Boyd, Parikh, Chu, Peleato, Eckstein,
  et~al.]{boyd2011distributed}
S.~Boyd, N.~Parikh \emph{et~al.}, ``Distributed optimization and statistical
  learning via the alternating direction method of multipliers,''
  \emph{Foundations and Trends{\textregistered} in Machine learning}, vol.~3,
  no.~1, pp. 1--122, 2011.

\bibitem[Zhao and D{\'a}n(2018)]{zhao2018benders}
P.~Zhao and G.~D{\'a}n, ``A benders decomposition approach for resilient
  placement of virtual process control functions in mobile edge clouds,''
  \emph{IEEE Transactions on Network and Service Management}, vol.~15, no.~4,
  pp. 1460--1472, 2018.

\bibitem[hua(2017)]{huawei}
\BIBentryALTinterwordspacing
``{5G} unlocks a world of opportunities-top ten {5G} usecases,'' \emph{Huawei
  White Paper}, 2017. [Online]. Available:
  \url{https://www-file.huawei.com/-/media/corporate/pdf/mbb/5g-unlocks-a-world-of-opportunities-v5.pdf?la=en}
\BIBentrySTDinterwordspacing

\bibitem[mec(2019)]{mecAndSlice}
\BIBentryALTinterwordspacing
``Multi-access edge computing ({MEC}); support for network slicing,''
  \emph{ETSI White paper}, no.~24, pp. 1--28, 2019. [Online]. Available:
  \url{https://www.etsi.org/deliver/etsi_gr/MEC/001_099/024/02.01.01_60/gr_MEC024v020101p.pdf}
\BIBentrySTDinterwordspacing

\bibitem[Ross(2014)]{ross2014introduction}
S.~M. Ross, \emph{Introduction to probability models}.\hskip 1em plus 0.5em
  minus 0.4em\relax Academic press, 2014.

\bibitem[Omidvar et~al.(2018)Omidvar, Liu, Lau, Zhang, Tsang, and
  Pakravan]{omidvar2018optimal}
N.~Omidvar, A.~Liu \emph{et~al.}, ``Optimal hierarchical radio resource
  management for {HetNets} with flexible backhaul,'' \emph{IEEE Transactions on
  Wireless Communications}, vol.~17, no.~7, pp. 4239--4255, 2018.

\bibitem[Martello(1990)]{martello1990knapsack}
S.~Martello, \emph{Knapsack problems: {A}lgorithms and computer
  implementations}.\hskip 1em plus 0.5em minus 0.4em\relax John Wiley \& Sons
  Ltd., 1990.

\bibitem[Palomar and Chiang(2006)]{palomar2006tutorial}
D.~P. Palomar and M.~Chiang, ``A tutorial on decomposition methods for network
  utility maximization,'' \emph{IEEE Journal on Selected Areas in
  Communications}, vol.~24, no.~8, pp. 1439--1451, 2006.

\bibitem[Boyd and Vandenberghe(2004)]{boyd2004convex}
S.~P. Boyd and L.~Vandenberghe, \emph{Convex optimization}.\hskip 1em plus
  0.5em minus 0.4em\relax Cambridge university press, 2004.

\bibitem[Boyd and Vandenberghe(2009)]{boyd2009convex}
S.~Boyd and L.~Vandenberghe, \emph{Convex Optimization}.\hskip 1em plus 0.5em
  minus 0.4em\relax Cambridge, UK, Cambridge University Press, 2009.

\bibitem[Geoffrion(1972)]{geoffrion1972generalized}
A.~M. Geoffrion, ``Generalized {B}enders decomposition,'' \emph{Journal of
  optimization theory and applications}, vol.~10, no.~4, pp. 237--260, 1972.

\bibitem[Floudas(1995)]{floudas1995nonlinear}
C.~A. Floudas, \emph{Nonlinear and mixed-integer optimization: fundamentals and
  applications}.\hskip 1em plus 0.5em minus 0.4em\relax Oxford University
  Press, 1995.

\bibitem[Diamond and Boyd(2016)]{diamond2016cvxpy}
S.~Diamond and S.~Boyd, ``{CVXPY}: {A} {P}ython-embedded modeling language for
  convex optimization,'' \emph{Journal of Machine Learning Research}, vol.~17,
  no.~83, pp. 1--5, 2016.

\bibitem[mos()]{mosek}
\BIBentryALTinterwordspacing
 [Online]. Available: \url{https://www.mosek.com/}
\BIBentrySTDinterwordspacing

\bibitem[gur()]{gurobi}
\BIBentryALTinterwordspacing
 [Online]. Available: \url{http://www.gurobi.com/}
\BIBentrySTDinterwordspacing

\bibitem[Tran et~al.(2018)Tran, Le, Yue, and Pompili]{tran2018cooperative}
T.~X. Tran, D.~V. Le \emph{et~al.}, ``Cooperative hierarchical caching and
  request scheduling in a cloud radio access network,'' \emph{IEEE Transactions
  on Mobile Computing}, vol.~17, no.~12, pp. 2729--2743, 2018.

\bibitem[Sma()]{SmartFace}
``{SmartFace, Plug and Play Face Recognition},''
  \url{https://www.innovatrics.com/ face-recognition-solutions}.

\bibitem[HPR()]{HPReality}
``{HP Windows Mixed Reality Headset Developer Edition},''
  \url{https://store.hp. com/us/en/cv/mixed-reality-headset}.

\bibitem[goo()]{googleCloud}
``{Google Cloud},'' \url{https://cloud.google.com}.

\end{thebibliography}

\end{document}